\documentclass{llncs}
\usepackage{graphicx}
\usepackage{hyperref}
\usepackage{xcolor}
\usepackage{xspace}
\usepackage{paralist}
\usepackage{amsmath,amssymb}
\usepackage{subfig}
\usepackage{wrapfig}

\newtheorem{observation}{Observation}

\newcommand{\NP}{$\mathcal{NP}$\xspace}

\renewcommand{\NP}{$\mathcal{NP}$}

\newcommand{\NPC}{\mbox{NP-complete}\xspace}


\definecolor{blue}{rgb}{0.274,0.392,0.666}
\definecolor{red}{rgb}{0.627,0.117,0.156}
\definecolor{green}{rgb}{0,0.588,0.509}

\newcommand{\red}[1]{{\color{red}{#1\xspace}}}
\newcommand{\blue}[1]{{\color{blue}{#1\xspace}}}
\newcommand{\green}[1]{{\color{green}{#1\xspace}}}

\newcommand{\Gr}[1]{$\red{G^{#1}_1}$\xspace}
\newcommand{\Gb}[1]{$\blue{G^{#1}_2}$\xspace}
\newcommand{\Gg}[1]{$\green{G^{#1}_3}$\xspace}

\newcommand{\sefeinstance}[1]{$\langle\red{G^{#1}_1},\blue{G^{#1}_2}\rangle$\xspace}
\newcommand{\sefethreeinstance}[1]{$\langle\red{G^{#1}_1},\blue{G^{#1}_2},\green{G^{#1}_3}\rangle$\xspace}

\let\doendproof\endproof
\renewcommand\endproof{~\hfill\qed\doendproof}

\newcommand{\try}{Beyond Level Planarity}
\title{\try\thanks{Research was partially supported by DFG grant Ka812/17-1, by MIUR project AMANDA, prot. 2012C4E3KT\_001, and by DFG grant WA 654/21-1.}
}

\newcommand{\tuba}{$^{\dag}$} 
\newcommand{\rome}{$^{\diamond}$} 
\newcommand{\kit}{$^{\circ}$}

\author{Patrizio {Angelini\tuba}, Giordano {Da Lozzo\rome}, Giuseppe {Di Battista\rome}, \\
    Fabrizio {Frati\rome}, Maurizio {Patrignani\rome}, Ignaz {Rutter\kit}
	\institute{
    \tuba T\"ubingen University, Germany --
    \email{angelini@informatik.uni-tuebingen.de}\\
	\rome~Roma Tre University, Italy --
    \email{\{dalozzo,gdb,frati,patrigna\}@dia.uniroma3.it}\\
    \kit~Karlsruhe Institute of Technology, Germany -- 
    \email{rutter@kit.edu}
}}

\begin{document}
\maketitle
\pagestyle{plain}

\begin{abstract}
In this paper we settle the computational complexity of two open problems related to the extension of the notion of level planarity to surfaces different from the plane. Namely, we show that the problems of testing the existence of a level embedding of a level graph on the surface of the rolling cylinder or on the surface of the torus, respectively known by the name of {\sc Cyclic Level Planarity} and {\sc Torus Level Planarity}, are polynomial-time solvable. 
 
Moreover, we show a complexity dichotomy for testing the {\sc Simultaneous Level Planarity} of a set of level graphs, with respect to both the number of level graphs and the number of levels.
\end{abstract}


\section{Introduction and Overview} \label{se:intro}

The study of level drawings of level graphs has spanned a long time; the seminal paper by Sugiyama {\em et al.}~on this subject~\cite{stt-mvuhs-81} dates back to 1981, well before graph drawing was recognized as a distinguished research area. This is motivated by the fact that level graphs naturally model hierarchically organized data sets and level drawings are a very intuitive way to represent such graphs.

Formally, a \emph{level graph} $(V,E,\gamma)$ is a directed graph $(V,E)$ together with a function $\gamma: V \rightarrow \{1,...,k\}$, with $1 \leq k \leq |V|$. The set $V_i = \{ v \in V \colon \gamma(v)=i\}$ is the $i$-th \emph{level} of $(V,E,\gamma)$.  
A level graph $(V,E,\gamma)$ is \emph{proper} if for each $(u,v) \in E$, either $\gamma(u) = \gamma(v) - 1$, or $\gamma(u)=k$ and $\gamma(v)=1$. Let $l_1,\dots,l_k$ be $k$ horizontal straight lines on the plane ordered in this way with respect to the $y$-axis. A \emph{level drawing} of $(V,E,\gamma)$ maps each vertex $v\in V_i$ to a point on $l_i$ and each edge $(u,v)\in E$ to a curve monotonically increasing in the $y$-direction from $u$ to $v$. Note that a level graph $(V,E,\gamma)$ containing an edge $(u,v)\in E$ with $\gamma(u)>\gamma(v)$ does not admit any level drawing. A level graph is \emph{level planar} if it admits a {\em level embedding}, i.e., a level drawing with no crossing; see Fig.~\ref{fig:drawings}(a). The {\sc Level Planarity} problem asks to test whether a given level graph is level planar.
 
\begin{figure}[tb!]
    \centering
      \subfloat[]{\includegraphics[height=.25\textwidth,page=1]{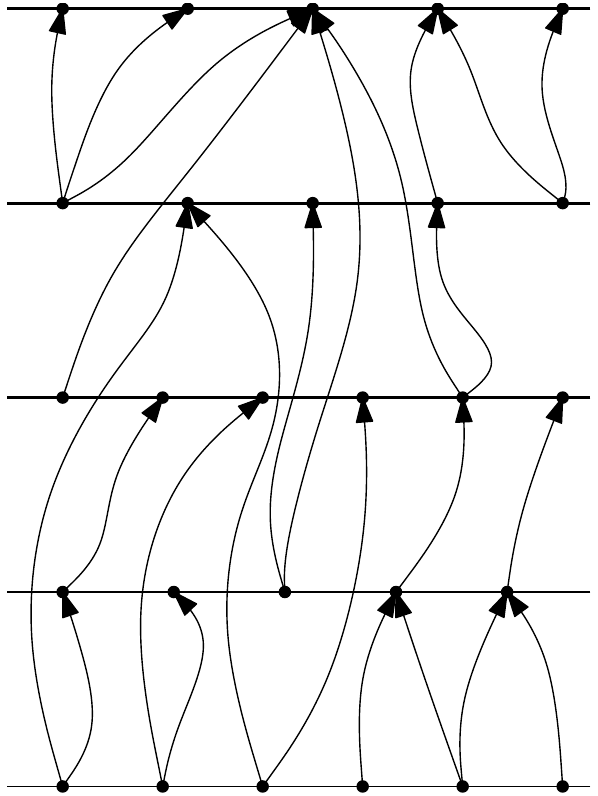}}\hfill
      \subfloat[]{\includegraphics[height=.25\textwidth,page=2]{level.pdf}}\hfill
      \subfloat[]{\includegraphics[height=.25\textwidth,page=3]{level.pdf}}\hfill
      \subfloat[]{\includegraphics[height=.25\textwidth,page=4]{level.pdf}}
    \caption{Level embeddings (a) on the plane, (b) on the standing cylinder, (c) on the rolling cylinder, and (d) on the torus.}
\label{fig:drawings}
\end{figure}

Problem {\sc Level Planarity} has been studied for decades~\cite{dn-hpt-88,hp-rlpdlt-95,jlm-lptlt-98,rsbhkmsc-sfplg-01,Fulek2013}, starting from a characterization of the single-source level planar graphs~\cite{dn-hpt-88} and culminating in a linear-time algorithm for general level graphs~\cite{jlm-lptlt-98}.
A characterization of level planarity in terms of ``minimal'' forbidden subgraphs is still missing~\cite{efk-clptmp-09,hkl-clpg-04}.
The problem has also been studied to take into account a clustering of the vertices ({\sc Clustered Level Planarity}~\cite{tibp-addfr-15,fb-clp-04}) or consecutivity constraints for the vertex orderings on the levels ({\sc $\mathcal{T}$-level Planarity}~\cite{tibp-addfr-15,wsp-gktlg-12}).

Differently from the standard notion of planarity, the concept of level planarity does not immediately extend to representations of level graphs on surfaces different from the plane\footnote{We consider connected orientable surfaces; the {\em genus} of a surface is the maximum number of cuttings along non-intersecting closed curves that do not disconnect it.}. When considering the surface $\mathbb O$ of a sphere, level drawings are usually defined as follows: The vertices have to be placed on the $k$ circles given by the intersection of $\mathbb O$ with $k$ parallel planes, and each edge is a curve on $\mathbb O$ that is monotone in the direction orthogonal to these planes. The notion of level planarity in this setting goes by the name of {\sc Radial Level Planarity} and is known to be decidable in linear time~\cite{bbf-rlptelt-05}. This setting is equivalent to the one in which the level graph is embedded on the ``standing cylinder'': Here, the vertices have to be placed on the circles defined by the intersection of the cylinder surface $\mathbb S$ with planes parallel to the cylinder bases, and the edges are curves on $\mathbb S$ monotone with respect to the cylinder axis; see~\cite{abbg-cpue-12,bbf-rlptelt-05,b-updsrc-14} and Fig.~\ref{fig:drawings}(b).

Problem {\sc Level Planarity} has been also studied on the surface $\mathbb R$ of a ``rolling cylinder''; see~\cite{abbg-cpue-12,bb-ltpte-08,bbk-clpte-07,b-updsrc-14} and Fig.~\ref{fig:drawings}(c). 
In this setting, $k$ straight lines $l_1,\dots,l_k$ parallel to the cylinder axis lie on $\mathbb R$, where $l_1,\dots,l_k$ are seen in this clockwise order from a point $p$ on one of the cylinder bases, the vertices of level $V_i$ have to be placed on $l_i$, for $i=1,\dots,k$, and each edge $(u,v)$ is a curve $\lambda$ lying on $\mathbb R$ and flowing monotonically in clockwise direction from $u$ to $v$ as seen from $p$.
%
Within this setting, the problem takes the name of {\sc Cyclic Level Planarity}~\cite{bbk-clpte-07}.
Note that a level graph $(V,E,\gamma)$ may now admit a level embedding even if it contains edges $(u,v)$ with $\gamma(u)>\gamma(v)$.
Contrary to the other mentioned settings, the complexity of testing {\sc Cyclic Level Planarity} is still unknown, and a polynomial (in fact, linear) time algorithm has been presented only for {\em strongly connected graphs}~\cite{bb-ltpte-08}, which are level graphs such that for each pair of vertices there exists a directed cycle through them. 

In this paper we settle the computational complexity of {\sc Cyclic Level Planarity} by showing a polynomial-time algorithm to test whether a level graph admits a cyclic level embedding (Theorem~\ref{co:cyclicLevel-polynomial}). 
In order to obtain this result, we study a version of level planarity in which the surface~$\mathbb T$ where the level graphs have to be embedded has genus $1$; we call {\sc Torus Level Planarity} the corresponding decision problem, whose study was suggested in~\cite{bbk-clpte-07}. It is not difficult to note (Lemmata~\ref{le:radial-torus} and~\ref{le:cyclic-torus}) that the torus surface combines the representational power of the surfaces of the standing and of the rolling cylinder -- that is, if a graph admits a level embedding on one of the latter surfaces, then it also admits a level embedding on the torus surface. Furthermore, both {\sc Radial Level Planarity} and {\sc Cyclic Level Planarity} (and hence {\sc Level Planarity}) reduce in linear time to {\sc Torus Level Planarity}.

The main result of the paper (Theorem~\ref{th:torus}) is a quadratic-time algorithm for proper instances of {\sc Torus Level Planarity} and a quartic-time algorithm for general instances. Our solution is based on a linear-time reduction (Observation~\ref{prop:characterization} and Lemmata~\ref{le:torus-characterization}-\ref{le:reduction}) that, starting from any proper instance of {\sc Torus Level Planarity}, produces an equivalent instance of the {\sc Simultaneous PQ-Ordering} problem~\cite{br-spacep-13} that can be solved in quadratic time (Theorem~\ref{th:2-fixed}).

Motivated by the growing interest in simultaneous embeddings of multiple planar graphs, which allow to display several relationships on the same set of entities in a unified representation, we define a new notion of level planarity in which multiple level graphs are considered and the goal is to find a simultaneous level embedding of them. 
The problem {\sc Simultaneous Embedding} (see the seminal paper~\cite{bcdeeiklm-spge-07} and a recent survey~\cite{bkr-sepg-13}) takes as input $k$ planar graphs $(V,E_1),\dots,(V,E_k)$ and asks whether they admit planar drawings mapping each vertex to the same point of the plane. We introduce the problem {\sc Simultaneous Level Planarity}, which asks whether $k$ level graphs $(V,E_1,\gamma),\dots,(V,E_k,\gamma)$ admit level embeddings mapping each vertex to the same point along the corresponding level.
As an instance of {\sc Simultaneous Level Planarity} for two graphs on two levels is equivalent to one of {\sc Cyclic Level Planarity} on two levels (Theorem~\ref{th:sim-poly}), we can solve {\sc Simultaneous Level Planarity} in polynomial time in this case. This positive result cannot be extended (unless P=NP), as the problem becomes \NPC even for two graphs on three levels and for three graphs on two levels (Theorem~\ref{thm:sim-level-npc-two-levels}). Altogether, this establishes a tight border of tractability for {\sc Simultaneous Level Planarity}.

\section{Preliminaries} \label{se:preliminaries}

A {\em tree} $T$ is a connected acyclic graph. The degree-$1$ vertices of $T$ are the {\em leaves} of $T$, denoted by ${\cal L}(T)$, while the remaining vertices are the {\em internal} vertices. 

A digraph $G=(V,E)$ without directed cycles is a {\em directed acyclic graph} ({\em DAG}). 
An edge $(u, v) \in E$ directed from $u$ to $v$ is an \emph{arc}; vertex $u$ is a {\em parent} of $v$ and $v$ is a {\em child} of $u$.
A vertex is a {\em source} ({\em sink}) if it has no parents~(children).

\smallskip
\noindent
{\bf Embeddings on levels.} 
An {\em embedding} of a graph on a surface $\mathbb Q$ is a mapping $\Gamma$ of each vertex $v$ to a distinct point on $\mathbb Q$ and of each edge $e=(u,v)$ to a simple Jordan curve on $\mathbb Q$ connecting $u$ and $v$, such that no two curves cross except at a common endpoint. 
Let $\mathbb I$ and $\mathbb S^1$ denote the unit interval and the boundary of the unit disk, respectively.
We define the surface $\mathbb S$ of the standing cylinder, $\mathbb R$ of the rolling cylinder, and $\mathbb T$ of the torus
as $\mathbb S = \mathbb S^1 \times \mathbb I$,
as $\mathbb R = \mathbb I \times \mathbb S^1$, and 
as $\mathbb T = \mathbb S^1 \times \mathbb S^1$, respectively.
The {\em $j$-th level} of surfaces $\mathbb S$, $\mathbb R$, and $\mathbb T$ with $k$ levels is defined as
 $l_j = \mathbb S^1 \times \{\frac{j-1}{k-1}\}$,
  $l_j = \mathbb I \times \{e^{2\pi i \frac{j-1}{k}}\}$, and
 $l_j = \mathbb S^1 \times \{e^{2\pi i \frac{j-1}{k}}\}$, respectively; see Fig.~\ref{fig:levels}.
%
%
An edge $(x,y)$ on $\mathbb S$, on $\mathbb R$, or on $\mathbb T$ is {\em monotone} if it intersects the levels $\gamma(x), \gamma(x)+1, \dots, \gamma(y)$, where $k+1=1$, exactly once and does not intersect any of the other levels.
%

 \begin{figure}[tb!]
\centering
\subfloat[]{\includegraphics[height=0.31\textwidth,page=3]{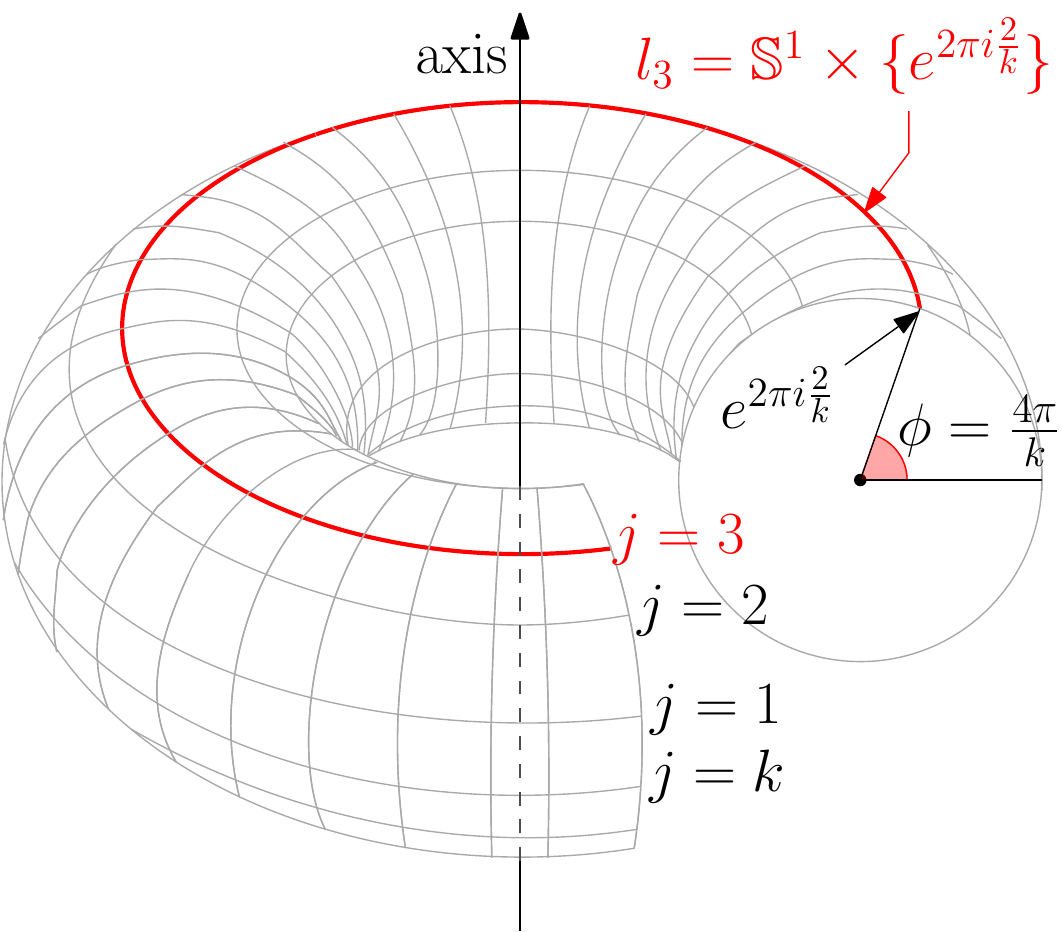}\label{fig:levels-S}}\hfil
\subfloat[]{\includegraphics[height=0.31\textwidth,page=2]{levels.pdf}\label{fig:levels-R}}\hfil
\subfloat[]{\includegraphics[height=0.31\textwidth,page=1]{levels.pdf}\label{fig:levels-T}}
\caption{Levels on (a) $\mathbb S$, on (b) $\mathbb R$, and on (c) $\mathbb T$, respectively.
} 
\label{fig:levels}
\end{figure}

Problems {\sc Radial}, {\sc Cyclic}, and {\sc Torus Level Planarity} take as input a level graph $G=(V,E,\gamma)$ and ask to find an embedding $\Gamma$ of $G$ on $\mathbb S$, on $\mathbb R$, and on $\mathbb T$, respectively, in which each vertex $v \in V$ lies on $l_{\gamma(v)}$ and each edge $(u,v) \in E$ is monotone. Embedding $\Gamma$ is called a {\em radial}, a {\em cyclic}, and a {\em torus level embedding} of $G$, respectively.
A level graph admitting a radial, cyclic, or torus level embedding is called {\em radial}, {\em cyclic}, or {\em torus level planar}, respectively.

Lemmata~\ref{le:radial-torus} and~\ref{le:cyclic-torus} show that the torus surface combines the power of representation of the standing and of the rolling cylinder. To strengthen this fact, we present a level graph in Fig.~\ref{fig:torus-level} that is neither radial nor cyclic level planar, yet it is torus level planar; note that the underlying (non-level) graph is also planar.

\begin{lemma}\label{le:radial-torus}
Every radial level planar graph is also torus level planar.
Further, {\sc Radial Level Planarity} reduces in linear time to {\sc Torus Level Planarity}.
\end{lemma}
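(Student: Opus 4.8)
The plan is to realize the standing cylinder $\mathbb S$ as a sub-surface of the torus $\mathbb T$ in a level-preserving and monotonicity-preserving way, and then to argue that, conversely, every torus level embedding of an \emph{upward} level graph is forced to live inside this sub-surface. The first fact gives the containment asserted in the first sentence, and the two facts together make the identity map a correct linear-time reduction. Concretely, I would fix the inclusion $\iota\colon \mathbb S^1\times\mathbb I\to\mathbb S^1\times\mathbb S^1$ given by $\iota(\varphi,t)=(\varphi,\,e^{2\pi i\,t\,(k-1)/k})$, which sends the $j$-th level $l_j=\mathbb S^1\times\{\frac{j-1}{k-1}\}$ of $\mathbb S$ onto the $j$-th level $l_j=\mathbb S^1\times\{e^{2\pi i\frac{j-1}{k}}\}$ of $\mathbb T$. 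This map is injective, and its image is the closed band $B=\mathbb S^1\times\{e^{2\pi i\theta}\colon 0\le\theta\le (k-1)/k\}$, i.e.\ the torus cut open along a circle lying strictly between level $k$ and level $1$. Given a radial level embedding $\Gamma$ of $G$ on $\mathbb S$, the composition $\iota\circ\Gamma$ places every vertex on the correct level of $\mathbb T$ and, since $\iota$ preserves the level-direction coordinate and is injective, it turns every $\mathbb S$-monotone edge into a $\mathbb T$-monotone edge and introduces no new crossings. Hence $G$ is torus level planar. Note that a radial level planar graph carries no edge $(u,v)$ with $\gamma(u)>\gamma(v)$, since such an edge admits no monotone curve on $\mathbb S$; so we may assume throughout that $G$ is upward.

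For the reduction I would take the identity, regarding the radial instance $G$ as a torus instance (first discarding, in linear time, any instance with an edge $(u,v)$ such that $\gamma(u)>\gamma(v)$, which is trivially negative for radial level planarity and can be sent to a fixed negative torus instance). The forward direction is exactly the containment above. For the converse, the key is a \emph{confinement} argument: in any torus level embedding of an upward graph, an edge $(u,v)$ with $\gamma(u)=a<b=\gamma(v)$ is monotone, so it crosses exactly the consecutive levels $a,a+1,\dots,b$, and therefore its level-direction coordinate stays within $[\theta_a,\theta_b]\subseteq[\theta_1,\theta_k]$; it cannot take the ``long way'' around without crossing a forbidden level. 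Consequently the whole drawing is contained in the band $B$ and avoids a circle $c=\mathbb S^1\times\{pt\}$ chosen in the wrap arc between level $k$ and level $1$. Cutting $\mathbb T$ along $c$ produces precisely a standing cylinder carrying levels $1,\dots,k$ in order, and the restricted drawing is then (after reparametrizing the level coordinate) a radial level embedding of $G$. This yields the equivalence, and since the construction is the identity plus a linear-time screening step, it is a linear-time reduction.

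The routine part is the bookkeeping that $\iota$ matches the two level systems and preserves monotonicity and planarity. The step that needs the most care, and which I regard as the crux, is the converse confinement claim: one must argue rigorously that an upward monotone edge cannot leave the band $\mathbb S^1\times[\theta_1,\theta_k]$, i.e.\ that the condition ``crosses exactly levels $a,\dots,b$, each once'' genuinely forces the level-direction coordinate to be monotone and thereby forbids any use of the torus handle or of the wrap arc. Once this is pinned down, cutting the torus along a circle in the wrap arc recovers the standing cylinder, and the radial embedding comes for free.
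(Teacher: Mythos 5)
Your first part (realizing $\mathbb S$ as a level-preserving sub-surface of $\mathbb T$ via $\iota$) is fine and is exactly the paper's observation. The reduction, however, has a genuine gap, and it sits precisely at the step you yourself flagged as the crux. Your confinement claim --- that a monotone edge $(u,v)$ with $\gamma(u)=a<b=\gamma(v)$ ``cannot take the long way around without crossing a forbidden level'' --- is only true when there actually \emph{is} a forbidden level on the long way. For $k\ge 3$ your separation argument goes through (any route through the complementary annulus would either re-cross $l_a$ or $l_b$, violating the exactly-once condition, or hit one of the levels $b+1,\dots,k,1,\dots,a-1$, or skip a required intermediate level). But for $k=2$ the complementary annulus between $l_2$ and $l_1$ contains no level at all, and an edge from $l_1$ to $l_2$ routed through it still crosses $l_1$ exactly once (at $u$) and $l_2$ exactly once (at $v$): it is monotone by the paper's definition. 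So on two levels \emph{both} annuli of the torus are available, and the identity map is not a correct reduction there. Concretely, take $K_{3,3}$ with $u_1,u_2,u_3$ on level $1$ and $v_1,v_2,v_3$ on level $2$, all edges upward: drawing the $6$-cycle $u_1v_1u_2v_2u_3v_3$ as a zigzag winding once around one annulus and the complementary matching $(u_1,v_2),(u_2,v_3),(u_3,v_1)$ as parallel spirals in the other annulus gives a torus level embedding (this is the standard genus-$1$ embedding of $K_{3,3}$); yet $K_{3,3}$ is not even planar, hence certainly not radial level planar. Your reduction would wrongly accept this instance, so the lemma as you prove it fails for $k=2$ (and $k=1$ is degenerate for the same reason).

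This is also where your route genuinely diverges from the paper's: the paper does not use the identity map but augments $G$ with a $4$-cycle $(a,b,c,d)$ with $\gamma'(a)=\gamma'(c)=k$ and $\gamma'(b)=\gamma'(d)=1$, i.e.\ it plants an explicit gadget spanning the wrap region, argues in the forward direction that the strip between $l_k$ and $l_1$ is empty of edges of $G$ (your screening observation --- no edge with $\gamma(u)>\gamma(v)$ in a radial level planar graph --- appears there in exactly the same role), and in the backward direction recovers the radial embedding by deleting the cycle's drawing. In other words, the paper supplies by a gadget the blocking structure that you try to extract for free from monotonicity, and the free extraction is only valid once $k\ge 3$. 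Your approach is repairable with a small patch: when $k\le 2$, first add one empty dummy level (this clearly preserves radial level planarity, and with three level circles present your confinement argument applies verbatim, since the long way around now crosses the dummy level); with that case added, your argument is complete and arguably makes the reverse direction more explicit than the paper's one-line deletion claim. As written, though, the proposal is incorrect on two-level instances.
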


\begin{proof}
The first part of the statement can be easily proved by observing that any level embedding on $\mathbb S$ is also a level embedding on $\mathbb T$.
We prove the second part of the statement. 
		Given an instance $G=(V=\bigcup_{i=1}^k{V_i},E,\gamma)$ of {\sc Radial Level Planarity} we construct an instance $G'=(V \cup \{a,b,c,d\}, E \cup \{(a,b),(b,c),(c,d),(d,a)\},\gamma')$ of {\sc Torus Level Planarity}, where $\gamma'(v) = \gamma(v)$, for each $v \in V$, $\gamma'(a)=\gamma'(c)=k$, and $\gamma'(b)=\gamma'(d)=1$. 
		Suppose that $G$ admits a radial level embedding $\Gamma$ on $\mathbb S$. Consider the corresponding torus level embedding $\Gamma'$ of $G$ on $\mathbb T$, which exists by the first part of the statement. Since $G$ does not contain any edge $(u,v) \in E$ such that $\gamma(u)>\gamma(v)$ (as otherwise, $G$ would not be radial level planar), the strip of $\mathbb T$ between $l_k$ and $l_1$ does not contain any edge. Hence, cycle $(a,b,c,d)$ can be added to $\Gamma'$ to obtain a torus level embedding of $G'$ on $\mathbb T$.
		Suppose that $G'$ admits a torus level embedding $\Gamma'$ on $\mathbb T$.
		A radial level embedding of $G$ on $\mathbb S$ can be obtained by removing the drawing of cycle $(a,b,c,d)$ in $\Gamma'$.
		\end{proof}

\begin{lemma}\label{le:cyclic-torus}
Every cyclic level planar graph is also torus level planar.
Further, {\sc Cyclic Level Planarity} reduces in linear time to {\sc Torus Level Planarity}.
\end{lemma}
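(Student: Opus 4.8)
The plan is to mirror the structure of the proof of Lemma~\ref{le:radial-torus}, exploiting the symmetry between the two cylinder surfaces. The first part of the statement again follows immediately from the definitions of the surfaces: since $\mathbb R = \mathbb I \times \mathbb S^1$ and $\mathbb T = \mathbb S^1 \times \mathbb S^1$, the rolling cylinder $\mathbb R$ is obtained from the torus $\mathbb T$ by cutting along one of the $\mathbb S^1$ factors, so the levels $l_j = \mathbb I \times \{e^{2\pi i (j-1)/k}\}$ of $\mathbb R$ coincide with the levels $l_j = \mathbb S^1 \times \{e^{2\pi i (j-1)/k}\}$ of $\mathbb T$ after this identification. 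Hence any cyclic level embedding $\Gamma$ of $G$ on $\mathbb R$, in which edges flow monotonically in the cyclic level direction, directly yields a torus level embedding of $G$ on $\mathbb T$: the monotonicity of each edge with respect to the levels is preserved, as is the planarity of the drawing.

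For the second part I would construct, from an instance $G=(V=\bigcup_{i=1}^{k}V_i,E,\gamma)$ of {\sc Cyclic Level Planarity}, an augmented instance $G'$ of {\sc Torus Level Planarity} by adding a structure that occupies and ``blocks'' the second cyclic direction of the torus, forcing the drawing of $G$ to live in a region equivalent to $\mathbb R$. The natural gadget is a monotone directed cycle that wraps once around the torus in the level direction, passing through every level exactly once. Concretely, I would add one new vertex $w_i$ on each level $l_i$ (so $\gamma'(w_i)=i$) together with the arcs $(w_i,w_{i+1})$ for $i=1,\dots,k$, where indices are taken cyclically so that $(w_k,w_1)\in E'$; each such arc is monotone by the definition given in the preliminaries. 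I would set $\gamma'(v)=\gamma(v)$ for all $v\in V$ and leave $E$ unchanged otherwise.

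The correctness argument then splits into the two implications. If $G$ is cyclic level planar, take a cyclic level embedding on $\mathbb R$, transfer it to $\mathbb T$ by the first part, and place the cycle $(w_1,\dots,w_k)$ along a single straight generator of the torus disjoint from the drawing of $G$; since on $\mathbb R$ the drawing of $G$ fits in $\mathbb I \times \mathbb S^1$ with a free longitudinal slice, this cycle can be added without crossings, giving a torus level embedding of $G'$. Conversely, if $G'$ is torus level planar, the monotone cycle $(w_1,\dots,w_k)$ winds exactly once around $\mathbb T$ in the level direction and is therefore a non-contractible closed curve; cutting $\mathbb T$ along it yields precisely the rolling cylinder $\mathbb R$, with the remaining drawing of $G$ confined to one side, hence a cyclic level embedding of $G$ on $\mathbb R$. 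Deleting the gadget recovers the desired embedding.

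The main obstacle I anticipate is the converse direction, specifically the topological claim that the added cycle is non-contractible and that cutting along it reconstructs $\mathbb R$ rather than a disk or sphere. One must verify that a monotone cycle visiting every level exactly once cannot be homotopic to a point on $\mathbb T$ and cannot wind around the second ($\mathbb S^1$) factor, so that the cut surface is genuinely a cylinder whose levels match those of $\mathbb R$; this requires a careful argument about the homotopy class of a level-monotone closed curve on the torus. Once that is established, the equivalence of embeddings and the linear running time of the reduction are routine, since $G'$ adds only $k$ vertices and $k$ edges to $G$.
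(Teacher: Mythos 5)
Your proposal matches the paper's proof essentially verbatim: the same gadget (a vertex $w_i$ on each level joined into the monotone cycle $(w_1,\dots,w_k,w_1)$), the same forward direction (draw the cycle along a base of $\mathbb R$ and transfer to $\mathbb T$ via the first part), and the same converse, which the paper states even more tersely as simply removing the drawing of the cycle. The obstacle you flag is not actually one: you need not rule out winding around the second $\mathbb S^1$ factor, since cutting $\mathbb T$ along any simple closed curve that meets each level exactly once (as this cycle does, by monotonicity and properness) turns every level circle into an arc and yields precisely the rolling cylinder $\mathbb R$ with the drawing of $G$ intact.
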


\begin{proof}
The first part of the statement can be proved by observing that any level embedding on $\mathbb R$ is also a level embedding on $\mathbb T$.
We prove the second part. Given an instance $G=(V=\bigcup_{i=1}^k{V_i},E,\gamma)$ of {\sc Cyclic Level Planarity}, we construct an instance $G'=(V \cup \{w_1,\dots,w_k\},$ $E \cup \bigcup_{i=1}^{k-1}{(w_i,w_{i+1})} \cup \{(w_k,w_1)\},\gamma')$ of {\sc Torus Level Planarity}, where $\gamma'(v) = \gamma(v)$ for each $v \in V$, and $\gamma'(w_i)=i$.
Suppose that $G$ admits a cyclic level embedding $\Gamma$ on $\mathbb R$. Add a drawing of cycle $(w_1,\dots,w_k,w_1)$ to $\Gamma$ along the boundary of one of the two bases of $\mathbb R$, thus obtaining a cyclic level embedding $\Gamma'$ of $G'$ on $\mathbb R$. From the first part of the statement there exists a torus level embedding of $G'$ on $\mathbb T$.
Suppose that $G'$ admits a torus level embedding $\Gamma'$ on $\mathbb T$.
A cyclic level embedding of $G$ on $\mathbb R$ can be obtained by removing the drawing of cycle $(w_1,\dots,w_k,w_1)$ in~$\Gamma'$.
\end{proof}

\begin{figure}[tb!]
\centering
\subfloat[]{\includegraphics{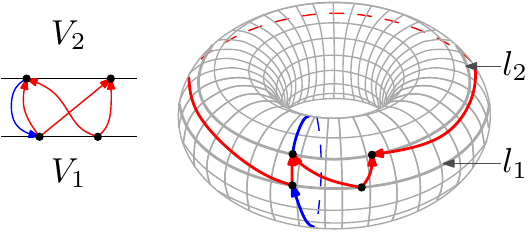}\label{fig:torus-level}}\hfil
\subfloat[]{\includegraphics{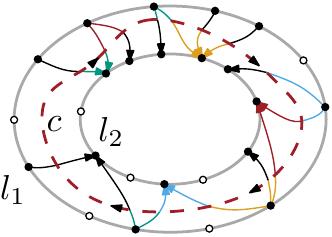}\label{fig:2-levels}}
\caption{(a) A level graph that is neither cyclic nor radial level planar, yet it is torus level planar. (b) A radial level embedding $\Gamma$ of a level graph $(V_1 \cup V_2,E,\gamma)$ on two levels. Colors are used for edges incident to vertices of degree larger than one to illustrate that the edge ordering on $E$ in $\Gamma$ is v-consecutive.} 
\label{fig:something}
\end{figure}

\smallskip
\noindent
{\bf Orderings and PQ-trees.} Let $A$ be a finite set. 
We call {\em linear ordering} any permutation of $A$.
When considering the first and the last elements of the permutation as consecutive, we talk about {\em circular orderings}. 
Let $\cal O$ be a circular ordering on $A$ and let ${\cal O}'$ be the circular ordering on $A' \subseteq A$ obtained by restricting $\cal O$ to the elements of $A'$. Then ${\cal O}'$ is a {\em suborder} of $\cal O$ and $\cal O$ is an {\em extension} of $\cal O'$. 
Let $A$ and $S$ be finite sets, let ${\cal O}' = s_1,s_2,\dots,s_{|S|}$ be a circular ordering on $S$, let  $\phi: S \rightarrow A$ be an injective map, and let $A' \subseteq A$ be the image of $S$ under $\phi$;
then $\phi({\cal O}')$ denotes the circular ordering $\phi(s_1),\phi(s_2),\dots,\phi(s_{|S|})$.
We also say that a circular ordering ${\cal O}'$ on $S$ is a {\em suborder} of a circular ordering $\cal O$ on $A$ (and $\cal O$ is an {\em extension} of ${\cal O}'$) if $\phi({\cal O'})$ is a suborder of ${\cal O}$.

An \emph{unrooted PQ-tree} $T$ is a tree whose leaves are the elements of a ground set $A$. PQ-tree $T$ can be used to represent all and only the circular orderings ${\cal O}(T)$ on $A$ satisfying a given set of \emph{consecutivity constraints} on $A$, each of which specifies that a subset of the elements of $A$ has to appear consecutively in all the represented circular orderings on $A$. 
The internal nodes of $T$ are either {\em P-nodes} or {\em Q-nodes}. The orderings in ${\cal O}(T)$ are all and only the circular orderings on the leaves of $T$ obtained by arbitrarily ordering the neighbours of each P-node and by arbitrarily selecting for each Q-node either a given circular ordering on its neighbours or its reverse ordering.
Note that possibly ${\cal O}(T)=\emptyset$, if $T$ is the empty tree, or ${\cal O}(T)$ represents all possible circular orderings on $A$, if $T$ is a star centered at a P-node. In the latter case, $T$ is the {\em universal} PQ-tree on $A$.

We illustrate three linear-time operations on PQ-trees (see~\cite{bl-tcop-76,hrl-tsp-13,hm-ppp-04}). Let $T$ and $T'$ be PQ-trees on $A$ and let $X\subseteq A$:
%
	The {\em reduction of $T$ by $X$} 
	builds a new PQ-tree on $A$ representing the circular orderings in ${\cal O}(T)$ in which the elements of $X$ are consecutive.
	The {\em projection of $T$ to $X$}, denoted by $T|_X$, builds a new PQ-tree on $X$ representing the circular orderings on $X$ that are suborders of circular orderings in ${\cal O}(T)$.
	The {\em intersection of $T$ and $T'$}, denoted by $T \cap T'$, builds a new PQ-tree on $A$ representing the circular orderings in ${\cal O}(T) \cap {\cal O}(T')$.

\smallskip
\noindent
{\bf Simultaneous PQ-Ordering.} Let $D=(N,Z)$ be a DAG with vertex set $N=\{T_1,\dots,T_k\}$, where $T_i$ is a PQ-tree, such that each arc $(T_i,T_j;\phi) \in Z$ consists of a source $T_i$, of a target $T_j$, and of an injective map $\phi: {\cal L}(T_j) \rightarrow {\cal L}(T_i)$ from the leaves of $T_j$ to the leaves of $T_i$. 
Given an arc $a=(T_i,T_j;\phi) \in Z$ and circular orderings ${\cal O}_i \in {\cal O}(T_i)$ and ${\cal O}_j \in {\cal O}(T_j)$, we say that arc $a$ is {\em satisfied} by $({\cal O}_i,{\cal O}_j)$ if ${\cal O}_i$ extends $\phi({\cal O}_j)$.
The {\sc Simultaneous PQ-Ordering} problem asks to find circular orderings ${\cal O}_1 \in {\cal O}(T_1),\dots,{\cal O}_k \in {\cal O}(T_k)$ on ${\cal L}(T_1),\dots,{\cal L}(T_k)$, respectively, such that each arc $(T_i,T_j;\phi) \in Z$ is satisfied by $({\cal O}_i,{\cal O}_j)$.

Let $(T_i,T_j;\phi)$ be an arc in $Z$. 
An internal node $\mu_i$ of $T_i$ is {\em fixed by} an internal node $\mu_j$ of $T_j$ (and $\mu_j$ {\em fixes} $\mu_i$ \emph{in} $T_i$) if there exist leaves $x,y,z \in {\cal L}(T_j)$ and $\phi(x),\phi(y),\phi(z) \in {\cal L}(T_i)$ such that 
(i) removing $\mu_j$ from $T_j$ makes $x$, $y$, and $z$ pairwise disconnected in $T_j$, and
(ii) removing $\mu_i$ from $T_i$ makes $\phi(x)$, $\phi(y)$, and $\phi(z)$ pairwise disconnected in $T_i$.
Note that by (i) the three paths connecting $\mu_j$ with $x$, $y$, and $z$ in $T_j$ share no node other than $\mu_j$, while by (ii) those connecting $\mu_i$ with $\phi(x)$, $\phi(y)$, and $\phi(z)$ in $T_i$ share no node other than $\mu_i$.
Since any ordering ${\cal O}_j$ determines a circular ordering around $\mu_j$ of the paths connecting it with $x$, $y$, and $z$ in $T_j$, any ordering ${\cal O}_i$ extending $\phi({\cal O}_j)$ determines the same circular ordering around $\mu_i$ of the paths connecting it with $\phi(x)$, $\phi(y)$, and $\phi(z)$ in $T_i$; this is why we say that $\mu_i$ is fixed by $\mu_j$.

Theorem~\ref{th:2-fixed} below will be a key ingredient in the algorithms of the next section. However, in order to exploit it, we need to consider {\em normalized} instances of {\sc Simultaneous PQ-Ordering}, namely instances $D=(N,Z)$ such that, for each arc $(T_i,T_j;\phi) \in Z$ and for each internal node $\mu_j \in T_j$, tree $T_i$ contains exactly one node $\mu_i$ that is fixed by $\mu_j$. This property can be guaranteed by an operation, called {\em normalization}~\cite{br-spacep-13}, defined as follows. Consider each arc $(T_i,T_j;\phi) \in Z$ and replace $T_j$ with $T_i|_{\phi({\cal L}(T_j))} \cap T_j$ in $D$, that is, replace tree $T_j$ with its intersection with the projection of its parent $T_i$ to the set of leaves of $T_i$ obtained by applying mapping $\phi$ to the leaves ${\cal L}(T_j)$ of $T_j$. 

Consider a normalized instance $D=(N,Z)$.
Let $\mu$ be a P-node of a PQ-tree $T$ with parents $T_1,\dots,T_r$ and let $\mu_i \in T_i$ be the unique node in $T_i$, with $1 \leq i \leq r$, fixed by $\mu$. The {\em fixedness} $fixed(\mu)$ of $\mu$ is defined as $fixed(\mu) = \omega + \sum^r_{i=1} (fixed(\mu_i)-1)$, where $\omega$ is the number of children of $T$ fixing $\mu$. A P-node $\mu$ is \emph{$k$-fixed} if $fixed(\mu) \leq k$. Also, instance $D$ is \emph{$k$-fixed} if all the P-nodes of any PQ-tree $T \in N$ are $k$-fixed. 

\begin{theorem}[Bl{\"{a}}sius and Rutter~\cite{br-spacep-13}, Theorems~3.2 and~3.3]\label{th:2-fixed} $2$-fixed instances of {\sc Simultaneous PQ-Ordering} can be tested in quadratic time.
\end{theorem}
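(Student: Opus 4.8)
The plan is to turn the search for simultaneous circular orderings into a binary constraint-satisfaction problem of the $2$-SAT type, exploiting the $2$-fixed hypothesis to keep all constraints pairwise. The first step is preprocessing: I would normalize the instance, replacing each target $T_j$ by $T_i|_{\phi(\mathcal{L}(T_j))} \cap T_j$ across every arc $(T_i,T_j;\phi)$. Using the linear-time projection and intersection operations introduced above, this makes the instance normalized in the stated sense, so that each internal node of a child fixes exactly one node in each of its parents. Two facts must be checked here: that normalization preserves the set of solutions (the replacement only discards circular orderings of $T_j$ that no extension along the arc could realize) and that it increases the total tree size only polynomially, so that the final running-time bound is unaffected.

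After normalization the only remaining freedom in a solution is combinatorial and local: at every Q-node one chooses between the prescribed circular order and its reverse, and at every P-node one chooses a circular permutation of its neighbours. I would encode each Q-node orientation as a Boolean variable, and then argue that for a $2$-fixed instance the P-node choices collapse to Boolean variables as well. Concretely, for a P-node $\mu$ of a tree $T$ with parents $T_1,\dots,T_r$ and fixed nodes $\mu_i \in T_i$, the quantity $fixed(\mu)=\omega+\sum_{i=1}^r (fixed(\mu_i)-1)$ counts the number of independent ordering constraints acting on the rotation around $\mu$: the $\omega$ constraints coming from the children of $T$ that fix $\mu$, plus those propagated from the parents. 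The central lemma is that when $fixed(\mu)\le 2$ these constraints pin down the rotation around $\mu$ up to a single reversal, so that $\mu$ contributes just one more Boolean variable, and agreement of the orderings along each arc can be expressed as implications between pairs of these variables.

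Assembling one clause per arc yields a $2$-SAT instance whose size is linear in the normalized total tree size; I would decide it in linear time via the implication-graph strongly-connected-components algorithm, and (if desired) reconstruct an actual simultaneous ordering from a satisfying assignment by one top-down sweep of the DAG. The quadratic overall bound is then inherited from the normalization and the projection/intersection computations, not from the $2$-SAT solving. The step I expect to be the real obstacle, and the only one that genuinely uses the hypothesis, is this central lemma: one must prove that fixedness at most $2$ forces the admissible rotations around each P-node to form a set closed under a single reversal, hence describable by binary relations, while carefully tracking through the recursion how each parent contributes $fixed(\mu_i)-1$ to the count. This is precisely the threshold beyond which the constraints turn ternary; for $fixed(\mu)=3$ one would expect a reduction from $3$-SAT, which explains why the tractability stops exactly at $2$-fixed instances.
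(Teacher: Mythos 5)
First, a framing point: the paper does not prove this statement at all --- it is imported verbatim from Bl\"asius and Rutter~\cite{br-spacep-13} (their Theorems~3.2 and~3.3) and used as a black box, so your attempt can only be judged against the argument in that cited source. Your skeleton (normalize, then reduce the remaining per-node choices to a binary CSP) parallels the opening moves of that argument: normalization and the decomposition of arc-satisfaction into per-node rotation/orientation agreement are indeed the right start, and Q-node orientations genuinely are Boolean choices. But the ``central lemma'' you yourself flag as the obstacle is false as stated, and it carries all the weight. Having $fixed(\mu)\le 2$ bounds the \emph{number of fixation events} acting on the rotation around a P-node $\mu$; it does not shrink the set of admissible rotations to a single ordering and its reverse. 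A counterexample sits inside this very paper's construction: in $I^*(G)$ each level tree $T_i$ is a universal star whose central P-node is exactly $2$-fixed ($\omega=2$, $r=0$, as computed in the proof of Lemma~\ref{le:construction}), yet \emph{every} circular ordering of $V_i$ occurs in some solution. Nor does fixing the children's orderings rescue the claim: take a star with central P-node on $\{a,b,c,d,e\}$ and two children that are universal trees on $\{a,b,c\}$ and on $\{c,d,e\}$; with the children's orders fixed to $(a,b,c)$ and $(c,d,e)$, both $(a,b,c,d,e)$ and $(a,b,e,c,d)$ are admissible rotations of the parent, and they are not reverses of one another. So P-node choices do not collapse to one Boolean variable each, ``agreement along an arc'' is not a $2$-clause (it decomposes into one condition per internal node of the child, not one per arc), and your $2$-SAT encoding has no sound semantics on P-nodes.

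What the $2$-fixedness hypothesis actually buys is different: at most two independent cyclic suborders are ever imposed on the rotation around any P-node, and the substance of the proof in~\cite{br-spacep-13} is showing that two such constraints, transmitted through the fixed nodes of a normalized instance, can always be satisfied simultaneously and propagated through the DAG --- an argument about common extensions of cyclic orders, in which only the genuinely binary Q-node orientations give rise to parity-type consistency conditions. With three fixation events this merging breaks down, which is the true source of hardness; your ``$fixed(\mu)=3$ suggests a $3$-SAT reduction'' intuition points in the right direction but is not substantiated by your counting argument. Two smaller corrections: normalization replaces $T_j$ by $T_i|_{\phi(\mathcal{L}(T_j))}\cap T_j$ and hence does not change leaf sets, so there is no polynomial size blow-up to control; and the quadratic running time in the cited theorem comes from the normalization and testing machinery itself, so attributing the bottleneck there while treating the combinatorial core as linear-time $2$-SAT misplaces where the work (and the difficulty) lies.
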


\section{Torus Level Planarity} \label{se:cyclic}

In this section we provide a polynomial-time testing and embedding algorithm for {\sc Torus Level Planarity} that is based on the following simple observation.

\begin{observation}\label{prop:characterization}
A proper level graph $G=(\bigcup^k_{i=1}V_i,E,\gamma)$ is torus level planar if and only if 
there exist circular orderings ${\cal O}_1,\dots,{\cal O}_k$ on $V_1,\dots,V_k$
such that, for each $1 \leq i \leq k$ with $k+1=1$, there exists a radial level embedding of the level graph $(V_i \cup V_{i+1},(V_i \times V_{i+1}) \cap E,\gamma)$ in which the circular orderings on $V_i$ along $l_i$ and on $V_{i+1}$ along $l_{i+1}$ are ${\cal O}_i$ and ${\cal O}_{i+1}$, respectively.
\end{observation}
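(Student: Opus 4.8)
The plan is to prove both directions of the equivalence by relating a global torus level embedding to the collection of its ``consecutive-level slices.''

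For the forward direction, I would start from a torus level embedding $\Gamma$ of the proper graph $G$. Since $G$ is proper, every edge of $E$ joins consecutive levels $l_i$ and $l_{i+1}$ (with $k+1=1$), so no edge crosses more than one inter-level strip. For each $i$, let ${\cal O}_i$ be the circular ordering in which the vertices of $V_i$ appear along the level circle $l_i = \mathbb{S}^1 \times \{e^{2\pi i (i-1)/k}\}$ in $\Gamma$. Now fix an index $i$ and consider the closed strip of $\mathbb{T}$ bounded by $l_i$ and $l_{i+1}$; topologically this strip is an annulus $\mathbb{S}^1 \times \mathbb{I}$, which is exactly the standing-cylinder surface $\mathbb{S}$ restricted to two levels. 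Restricting $\Gamma$ to $V_i \cup V_{i+1}$ together with the edges of $(V_i \times V_{i+1}) \cap E$ (all of which are monotone and lie inside this strip) yields a crossing-free monotone drawing on this annulus, i.e.\ a radial level embedding of $(V_i \cup V_{i+1},(V_i \times V_{i+1}) \cap E,\gamma)$ realizing the orderings ${\cal O}_i$ and ${\cal O}_{i+1}$. Thus the orderings ${\cal O}_1,\dots,{\cal O}_k$ extracted from $\Gamma$ satisfy the stated condition.

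For the reverse direction, I would assemble the $k$ radial slice-embeddings into a single torus embedding. Given orderings ${\cal O}_1,\dots,{\cal O}_k$ and, for each $i$, a radial level embedding $\Gamma_i$ of $(V_i \cup V_{i+1},(V_i \times V_{i+1}) \cap E,\gamma)$ inducing ${\cal O}_i$ on $l_i$ and ${\cal O}_{i+1}$ on $l_{i+1}$, I would place the vertices of each $V_i$ on the torus circle $l_i$ in the circular order ${\cal O}_i$, and then glue the drawings of the edges. The key point is that for each $i$ the two embeddings $\Gamma_i$ and $\Gamma_{i+1}$ agree on the circular ordering ${\cal O}_{i+1}$ of $V_{i+1}$, so the edge curves coming into $l_{i+1}$ from below (drawn in the $(i,i+1)$-strip of $\mathbb{T}$) and those leaving $l_{i+1}$ upward (drawn in the $(i+1,i+2)$-strip) share only the common vertex positions and do not cross. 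Since each strip is an annular piece of the torus and the drawings within distinct strips are interior-disjoint, the union is a crossing-free monotone embedding of $G$ on $\mathbb{T}$, i.e.\ a torus level embedding.

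The main obstacle, and the place where properness is essential, is ensuring that gluing the slices along each shared level circle creates no crossings and that the monotonicity is globally consistent. Because $G$ is proper, every edge lives entirely in a single strip, so there is no edge that must be routed across a level circle and could tangle with the drawings in adjacent strips; the only interaction between consecutive slices is the shared circular ordering on the common level, and the hypothesis that ${\cal O}_{i+1}$ is realized identically by both $\Gamma_i$ and $\Gamma_{i+1}$ is exactly what is needed to make the gluing seamless. I would therefore devote the bulk of the argument to verifying this seam-compatibility, and note that the topological identification of an inter-level strip of $\mathbb{T}$ with the two-level standing cylinder $\mathbb{S}$ is what legitimizes passing between torus slices and radial level embeddings in both directions.
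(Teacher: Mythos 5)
Your proof is correct and matches the paper's intent exactly: the paper states this as an unproved ``simple observation,'' and the implicit argument is precisely your strip decomposition --- properness confines each edge to the annulus between consecutive level circles, each such annulus is a two-level standing cylinder, and the shared circular orderings are exactly the data needed to cut (forward direction) and glue (reverse direction) the radial slice embeddings. Your explicit attention to seam-compatibility and to why properness is essential fills in the details the paper omits, with no gaps.
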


In view of Observation~\ref{prop:characterization} we focus on a level graph $G=(V_1 \cup V_2,E,\gamma)$ on two levels $l_1$ and $l_2$. Denote by $V_1^+$ and by $V_{2}^-$ the subsets of $V_1$ and of $V_2$ that are incident to edges in $E$, respectively.
Let $\Gamma$ be a radial level embedding of $G$.
Consider a closed curve $c$ separating levels $l_1$ and $l_2$ and intersecting all the edges in $E$ exactly once.
The {\em edge ordering on $E$ in $\Gamma$} is the circular ordering in which the edges in $E$ intersect $c$ according to a clockwise orientation of $c$ on the surface~$\mathbb S$ of the standing cylinder; see Fig.~\ref{fig:2-levels}.
Further, let $\cal O$ be a circular ordering on the edge set $E$. Ordering $\cal O$ is {\em vertex-consecutive} ({\em v-consecutive}) if all the edges incident to each vertex in $V_1 \cup V_2$ are consecutive in $\cal O$. 

Let $\cal O$ be a v-consecutive ordering on $E$. 
We define orderings ${\cal O}^+_1$ on $V_1^+$ and ${\cal O}^-_2$ on $V_2^-$ {\em induced by} ${\cal O}$, as follows.
Consider the edges in $E$ one by one as they appear in ${\cal O}$. Append the end-vertex in $V^+_1$ of the currently considered edge to a list ${\cal L}^+_1$. 
Since $\cal O$ is v-consecutive, the occurrences of the same vertex appear consecutively in ${\cal L}^+_1$, regarding such a list as circular.
Hence, ${\cal L}^+_1$ can be turned into a circular ordering ${\cal O}^+_1$ on $V^+_1$ by removing repetitions of the same vertex. Circular ordering ${\cal O}^-_2$ can be constructed analogously. We have the following.

\begin{lemma}\label{le:torus-characterization}
Let $\cal O$ be a circular ordering on $E$ and $({\cal O}_1,{\cal O}_2)$ be a pair of circular orderings on $V_1$ and $V_2$. 
There exists a radial level embedding of $G$ whose edge ordering is $\cal O$ and such that the circular orderings on $V_1$ and $V_2$ along $l_1$ and $l_2$ are ${\cal O}_1$ and ${\cal O}_2$, respectively,
if and only if $\cal O$ is v-consecutive, and ${\cal O}_1$ and ${\cal O}_2$ extend the orderings ${\cal O}^+_1$ and ${\cal O}^-_2$ on $V_1^+$ and $V_2^-$ induced by $\cal O$, respectively.
\end{lemma}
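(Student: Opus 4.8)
The plan is to prove the two directions of the equivalence separately, in both cases reasoning about the closed curve $c$ that defines the edge ordering and about the two annular regions of $\mathbb S$ that $c$, together with $l_1$ and $l_2$, bounds.

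For the necessity direction, I would start from a radial level embedding $\Gamma$ of $G$ with edge ordering $\mathcal O$ and vertex orderings $\mathcal O_1,\mathcal O_2$. The v-consecutiveness of $\mathcal O$ follows from a Jordan-curve argument: all edges incident to a fixed vertex $u\in V_1^+$ emanate from the single point $u$ and, together with $c$ and an arc of $l_1$, bound a region of $\mathbb S$; since the edges are pairwise non-crossing and monotone, every edge incident to a different vertex meets $c$ either entirely inside or entirely outside the block of crossings of $u$'s edges, so these crossings are consecutive along $c$. To see that $\mathcal O_1$ extends $\mathcal O_1^+$ (and symmetrically $\mathcal O_2$ extends $\mathcal O_2^-$), I would observe that inside the annulus bounded by $l_1$ and $c$ each edge incident to $u$ runs crossing-free and monotonically from $u$ to its crossing point on $c$; as non-crossing monotone curves joining the two boundary circles of an annulus preserve the cyclic order of their endpoints, the clockwise order of the vertices of $V_1^+$ along $l_1$ coincides with the clockwise order in which their edge-blocks meet $c$, which is exactly $\mathcal O_1^+$. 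Since $V_1^+\subseteq V_1$, this means $\mathcal O_1$ restricted to $V_1^+$ equals $\mathcal O_1^+$, i.e. $\mathcal O_1$ extends $\mathcal O_1^+$.

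For the sufficiency direction, I would construct $\Gamma$ explicitly. Place the vertices of $V_1$ on $l_1$ and of $V_2$ on $l_2$ in the prescribed circular orders $\mathcal O_1$ and $\mathcal O_2$, and fix $|E|$ distinct \emph{ports} on $c$, one per edge, in the circular order $\mathcal O$. Each edge is then drawn as the concatenation of a monotone curve from its $V_1$-endpoint to its port in the lower annulus and a monotone curve from its port to its $V_2$-endpoint in the upper annulus; it remains to route the lower annulus crossing-free, the upper one being symmetric. Because $\mathcal O$ is v-consecutive, the ports of the edges incident to a common vertex $u\in V_1^+$ form a contiguous arc of $c$, and because $\mathcal O_1$ extends $\mathcal O_1^+$, the clockwise block-order of these arcs along $c$ agrees with the clockwise order of the corresponding vertices along $l_1$. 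The vertices of $V_1\setminus V_1^+$ carry no edges and can be placed in the remaining gaps consistently with $\mathcal O_1$.

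The main obstacle will be making the sufficiency routing rigorous: I must argue that the lower annulus admits a simultaneous non-crossing, level-monotone routing of all the blocks, given only that the circular block-order on $c$ matches the circular vertex-order on $l_1$. The cleanest way is to cut the annulus along a single edge-to-port curve so as to unroll it into a rectangle, thereby reducing the circular matching to a linear one; the blocks can then be connected left to right by a nested family of monotone, pairwise non-crossing curves precisely because the block-order and the vertex-order coincide. Re-gluing along the cut yields a radial embedding whose edge ordering is $\mathcal O$ (by the choice of ports) and whose vertex orderings are $\mathcal O_1$ and $\mathcal O_2$ (by the placement of vertices), completing the proof.
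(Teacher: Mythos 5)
Your proposal is correct and takes essentially the same approach as the paper: the paper also treats necessity as immediate, and proves sufficiency by starting from $|E|$ pairwise non-crossing monotone curves whose edge ordering is $\mathcal{O}$ (your ports), merging the consecutive endpoints of curves sharing a vertex into single points (your block routing, justified there by a continuous deformation rather than your cut-and-unroll argument), and finally inserting the isolated vertices of $V_1\setminus V_1^+$ and $V_2\setminus V_2^-$ to realize $\mathcal{O}_1$ and $\mathcal{O}_2$. Your explicit Jordan-curve argument for necessity and the unrolling of the annulus merely flesh out steps the paper leaves informal.
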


\begin{proof}
The necessity is trivial. 
For the sufficiency, assume that $\cal O$ is v-consecutive and that ${\cal O}_1$ and ${\cal O}_2$ extend the orderings ${\cal O}^+_1$ and ${\cal O}^-_2$ on $V_1^+$ and $V_2^-$ induced by $\cal O$, respectively.
We construct a radial level embedding $\Gamma$ of $G$ with the desired properties, as follows. Let $\Gamma^*$ be a radial level embedding consisting of $|E|$ non-crossing curves, each connecting a distinct point on $l_1$ and a distinct point on $l_2$. We associate each curve with a distinct edge in $E$, so that the edge ordering of $\Gamma^*$ is $\cal O$. Note that, since $\cal O$ is v-consecutive, all the occurrences of the same vertex of $V^+_1$ and of $V_2^-$ appear consecutively along $l_1$ and $l_2$, respectively. Hence, we can transform $\Gamma^*$ into a radial level embedding $\Gamma'$ of
$G'=(V_1^+ \cup V_2^-, E, \gamma)$, by continuously deforming the curves in $\Gamma^*$ incident to occurrences of the same vertex in $V_1^+$ (in $V_2^-$) so that their end-points on $l_1$ (on $l_2$) coincide. Since the circular orderings on $V_1^+$ and on $V_2^-$ along $l_1$ and $l_2$ are ${\cal O}^+_1$ and ${\cal O}^-_2$, respectively, we can construct $\Gamma$ by inserting the isolated vertices in $V_1 \setminus V_1^+$ and $V_2 \setminus V_2^-$ at suitable points along $l_1$ and $l_2$, so that the circular orderings on $V_1$ and on $V_2$ along $l_1$ and $l_2$ are ${\cal O}_1$ and ${\cal O}_2$, respectively.
\end{proof}

We construct an instance $I(G)$ of {\sc Simultaneous PQ-Ordering} starting from a level graph $G=(V_1 \cup V_2,E,\gamma)$ on two levels as follows; refer to Fig.~\ref{fig:simpq-instance}, where $I(G)$ corresponds to the subinstance $I(G_{i,i+1})$ in the dashed box.
We define the {\em level trees} $T_1$ and $T_2$ as the universal PQ-trees on $V_1$ and $V_2$, respectively. 
Also, we define the {\em layer tree} $T_{1,2}$ as the PQ-tree on $E$ representing exactly the edge orderings for which a radial level embedding of $G$ exists, which are the v-consecutive orderings on $E$, by Lemma~\ref{le:torus-characterization}. 
The PQ-tree $T_{1,2}$ can be constructed in $O(|G|)$ time~\cite{bl-tcop-76,hm-ppp-04}.
We define the {\em consistency trees} $T_1^+$ and $T_2^-$ as the universal PQ-trees on $V_1^+$ and $V_2^-$, respectively.
Instance $I(G)$ contains $T_1$, $T_2$, $T_{1,2}$, $T_1^+$, and $T_2^-$, together with the arcs $(T_1, T_1^+, \iota)$,
$(T_2, T_2^-, \iota)$,
$(T_{1,2}, T_1^+, \phi_1^+)$,
 and $(T_{1,2}, T_2^-, \phi_{2}^-)$, where $\iota$ denotes the identity map and $\phi_1^+$  ($\phi_2^-$) assigns to each vertex in $V_1^+$ (in $V_2^-$) an incident edge in $E$. We have the following.

\begin{figure}[tb]
\centering
\includegraphics[width=\textwidth,page=2]{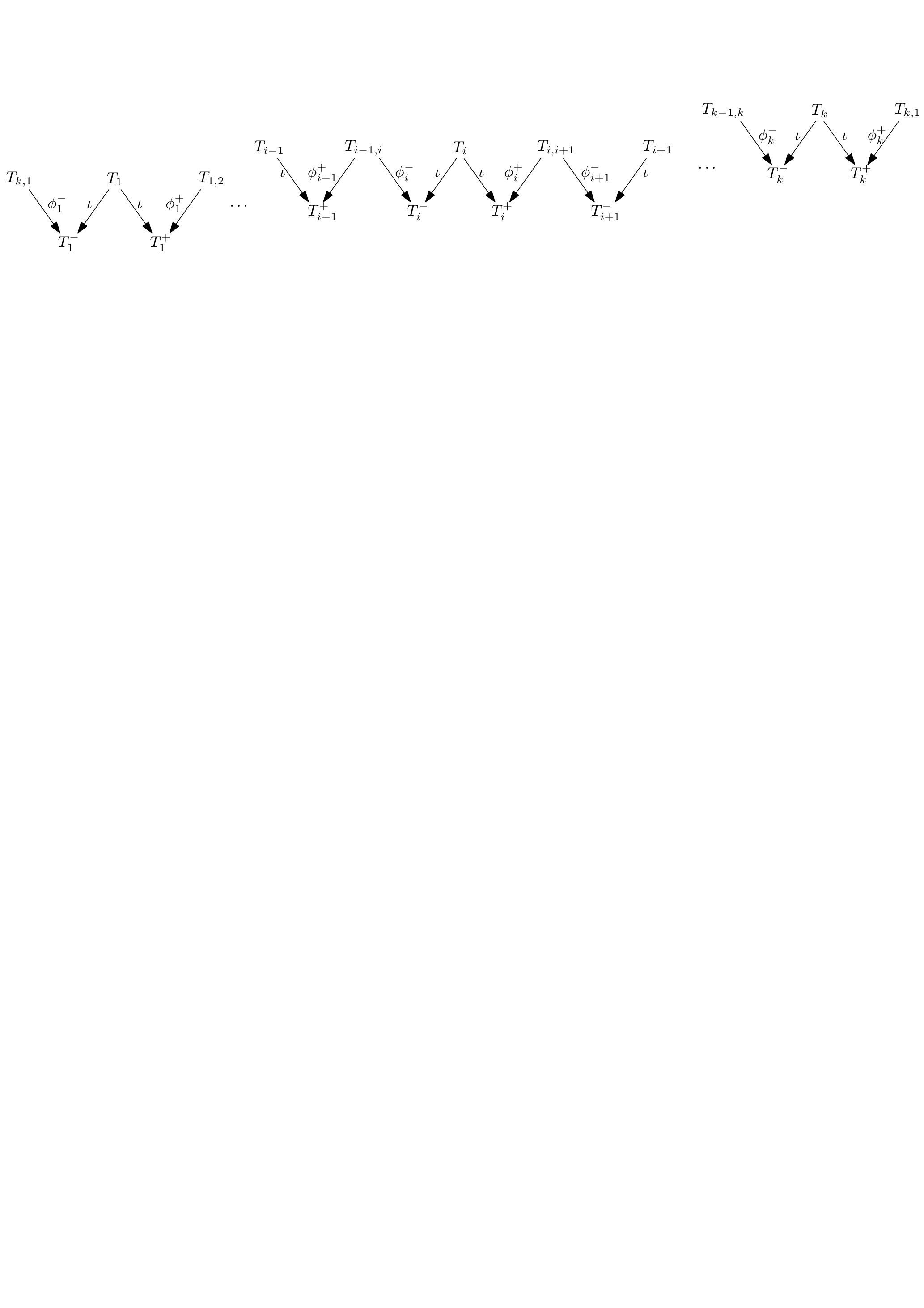}
\caption{Instance $I^*(G)$ of {\sc Simultaneous PQ-Ordering} for a level graph $G=(V,E,\gamma)$. Instance $I(G_{i,i+1})$ corresponding to the level graph $(V_i \cup V_{i+1}, (V_i \times V_{i+1}) \cap E, \gamma)$ induced by levels $i$ and $i+1$ of $G$ is enclosed in a dashed box.}
\label{fig:simpq-instance}
\end{figure}

\begin{lemma}\label{le:equivalence}
Level graph $G$ admits a radial level embedding in which the circular ordering on $V_1$ along $l_1$ is ${\cal O}_1$ and the circular ordering on $V_2$ along $l_2$ is ${\cal O}_2$ if and only if instance $I(G)$ of {\sc Simultaneous PQ-Ordering} admits a solution in which the circular ordering on $\mathcal{L}(T_1)$ is ${\cal O}_1$ and the one on $\mathcal{L}(T_2)$ is ${\cal O}_2$.
\end{lemma}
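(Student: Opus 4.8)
The plan is to use Lemma~\ref{le:torus-characterization} as the bridge between the geometric object (a radial level embedding realizing prescribed vertex orderings) and the combinatorial object (a solution of $I(G)$), treating the edge ordering $\cal O$ assigned to $T_{1,2}$ as the common pivot and the consistency-tree orderings on $T_1^+$ and $T_2^-$ as bookkeeping that must be shown to coincide with the orderings ${\cal O}^+_1$ and ${\cal O}^-_2$ induced by $\cal O$.

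For the forward direction, I would start from a radial level embedding realizing $({\cal O}_1,{\cal O}_2)$, read off its edge ordering $\cal O$, and define a solution of $I(G)$ by assigning $\cal O$ to $T_{1,2}$, the induced orderings ${\cal O}^+_1,{\cal O}^-_2$ to $T_1^+,T_2^-$, and $({\cal O}_1,{\cal O}_2)$ to $(T_1,T_2)$. By the necessity part of Lemma~\ref{le:torus-characterization}, $\cal O$ is v-consecutive and hence lies in ${\cal O}(T_{1,2})$, while ${\cal O}^+_1,{\cal O}^-_2$ trivially lie in the universal trees $T_1^+,T_2^-$. The identity arcs $(T_1,T_1^+,\iota)$ and $(T_2,T_2^-,\iota)$ are then immediate, again from the necessity part of Lemma~\ref{le:torus-characterization}, which asserts precisely that ${\cal O}_1$ (resp.\ ${\cal O}_2$) extends ${\cal O}^+_1$ (resp.\ ${\cal O}^-_2$). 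The only arcs requiring a short argument are $(T_{1,2},T_1^+,\phi_1^+)$ and $(T_{1,2},T_2^-,\phi_2^-)$: here I need the representative edges $\phi_1^+(v)$ to appear in $\cal O$ in the same cyclic order as the vertices $v$ appear in ${\cal O}^+_1$. This holds because v-consecutiveness groups the edges incident to each $v\in V_1^+$ into a contiguous block of $\cal O$, and ${\cal O}^+_1$ is by definition the order of these blocks; a representative edge, lying inside its block, inherits exactly this order, so $\cal O$ extends $\phi_1^+({\cal O}^+_1)$.

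For the reverse direction, I would take a solution of $I(G)$ with orderings $\cal O$ on $T_{1,2}$, ${\cal O}^+_1$ on $T_1^+$, ${\cal O}^-_2$ on $T_2^-$, and $({\cal O}_1,{\cal O}_2)$ on $(T_1,T_2)$, and reconstruct a radial level embedding through the sufficiency part of Lemma~\ref{le:torus-characterization}. Membership $\cal O\in{\cal O}(T_{1,2})$ gives v-consecutiveness for free, so it remains to show that ${\cal O}_1$ extends the ordering of $V_1^+$ induced by $\cal O$ (and symmetrically for $V_2^-$). I would argue this in two steps. First, satisfaction of the arc $(T_{1,2},T_1^+,\phi_1^+)$ forces the solution's ordering ${\cal O}^+_1$ on $T_1^+$ to equal the ordering of $V_1^+$ induced by $\cal O$, since each representative edge $\phi_1^+(v)$ sits inside the contiguous block of $v$ and its position in $\cal O$ therefore reproduces the block order. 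Second, satisfaction of the identity arc $(T_1,T_1^+,\iota)$ then gives that ${\cal O}_1$ extends this induced ordering. With both induced conditions established, the sufficiency part of Lemma~\ref{le:torus-characterization} produces a radial level embedding whose edge ordering is $\cal O$ and whose vertex orderings are $({\cal O}_1,{\cal O}_2)$, as required.

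The main obstacle is the identification, in both directions, of the abstract consistency-tree ordering---controlled only through one representative edge per vertex via $\phi_1^+$ and $\phi_2^-$---with the concrete ordering that Lemma~\ref{le:torus-characterization} induces from the full edge ordering $\cal O$. The essential observation is that a representative edge always lies within its vertex's contiguous v-consecutive block, so representatives and blocks share a common cyclic order; the care needed lies in making this precise for circular (rather than linear) orderings and in checking that it is insensitive to which incident edge $\phi_1^+$ happens to select.
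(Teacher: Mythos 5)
Your proposal is correct and follows essentially the same route as the paper's proof: both directions hinge on Lemma~\ref{le:torus-characterization}, assigning the edge ordering to $T_{1,2}$, the induced orderings to the consistency trees, and $({\cal O}_1,{\cal O}_2)$ to the level trees, then checking the four arcs. Your explicit argument that the representative edges $\phi_1^+(v)$, each lying inside its vertex's contiguous block of a v-consecutive ordering, reproduce exactly the block order---so that arc satisfaction via a single representative per vertex pins the consistency-tree ordering to the one induced by $\cal O$---is precisely the identification the paper asserts more tersely, and making it explicit is a sound (indeed slightly more careful) rendering of the same proof.
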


\begin{proof}
We prove the necessity. Let $\Gamma$ be a radial level embedding of $G$. We construct an ordering on the leaves of each tree in $I(G)$ as follows. Let ${\cal O}_1$, ${\cal O}_2$, ${\cal O}^+_1$, and ${\cal O}^-_2$ be the circular orderings on $V_1$ along $l_1$, on $V_2$ along $l_2$, on $V_1^+$ along $l_1$, and on $V_2^-$ along $l_2$ in $\Gamma$, respectively. Let ${\cal O}$ be the edge ordering on $E$ in $\Gamma$. Note that ${\cal O}\in {\cal O}(T_{1,2})$ since ${\cal O}$ is v-consecutive by Lemma~\ref{le:torus-characterization}. The remaining trees are universal, hence ${\cal O}_{1}\in {\cal O}(T_{1})$, ${\cal O}_{2}\in {\cal O}(T_{2})$, ${\cal O}^+_{1}\in {\cal O}(T^+_{1})$, and ${\cal O}^-_{2}\in {\cal O}(T^-_{2})$. 

We prove that all arcs of $I(G)$ are satisfied. Arc $(T_1,T_1^+,\iota)$ is satisfied if and only if ${\cal O}_1$ extends ${\cal O}^+_1$. This is the case since $\iota$ is the identity map, since $V_1^+ \subseteq V_1$, and since ${\cal O}_1$ and ${\cal O}^+_1$ are the circular orderings on $V_1$ and $V_1^+$ along $l_1$. Analogously, arc $(T_2,T_2^-,\iota)$ is satisfied. Arc $(T_{1,2},T_1^+,\phi^+_1)$ is satisfied if and only if ${\cal O}$ extends ${\cal O}^+_1$. This is due to the fact that $\phi^+_1$ assigns to each vertex in $V_1^+$ an incident edge in $E$ and to the fact that, by Lemma~\ref{le:torus-characterization}, ordering ${\cal O}$ is v-consecutive and ${\cal O}^+_1$ is induced by ${\cal O}$. Analogously, arc $(T_{1,2},T_{2}^-,\phi^-_2)$ is satisfied.  

We prove the sufficiency. Suppose that $I(G)$ is a positive instance of {\sc Simultaneous PQ-Ordering}, that is, there exist orderings ${\cal O}_1$, ${\cal O}_2$, ${\cal O}^+_1$, ${\cal O}^-_2$, and ${\cal O}$ of the leaves of the trees $T_1$, $T_2$, $T^+_1$, $T^-_2$, and $T_{1,2}$, respectively, satisfying all arcs of $I(G)$. 
Since $\iota$ is the identity map and since arcs $(T_1,T_1^+,\iota)$ and $(T_2,T_2^-,\iota)$ are satisfied, we have that ${\cal O}^+_1$ and ${\cal O}^-_2$ are restrictions of ${\cal O}_1$ and ${\cal O}_2$ to $V_1^+$ and $V_2^-$, respectively. Also, since $(T_{1,2},T_1^+,\phi^+_1)$ and $(T_{1,2},T_{2}^-,\phi^-_2)$ are satisfied, we have that ${\cal O}$ extends both ${\cal O}^+_1$ and ${\cal O}^-_2$. By the construction of $T_{1,2}$, ordering ${\cal O}$ is v-consecutive. By Lemma~\ref{le:torus-characterization}, a radial level embedding $\Gamma$ of $G$ exists in which the circular ordering on $V_i$ along $l_i$ is ${\cal O}_i$, for $i=1,2$.
\end{proof}

We now show how to construct an instance $I^*(G)$ of {\sc Simultaneous PQ-Ordering} from a proper level graph $G=(\bigcup^k_{i=1} V_i,E,\gamma)$ on $k$ levels; refer to Fig.~\ref{fig:simpq-instance}. For each $i=1,\dots,k$, let $I(G_{i,i+1})$ be the instance of {\sc
  Simultaneous PQ-Ordering} constructed as described above starting from the level graph on two levels $G_{i,i+1}=(V_i \cup V_{i+1}, (V_i \times V_{i+1}) \cap E, \gamma)$ (in the construction $V_i$ takes the role of $V_1$, $V_{i+1}$ takes the role of $V_2$, and  $k+1=1$). Any two instances $I(G_{i-1,i})$ and $I(G_{i,i+1})$ share exactly the level tree $T_i$, whereas non-adjacent instances are disjoint. We define $I^\cup(G)= \bigcup^k_{i=1} I(G_{i,i+1})$ and obtain $I^*(G)$ by normalizing $I^\cup(G)$.  We now present two lemmata about properties of instance $I^*(G)$. 

\begin{lemma} \label{le:construction}
$I^*(G)$ is $2$-fixed, has $O(|G|)$ size, and can be built in $O(|G|)$ time. 
\end{lemma}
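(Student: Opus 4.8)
The plan is to establish the three claimed properties of $I^*(G)$ separately, treating size and construction time as essentially bookkeeping over the $k$ gadgets $I(G_{i,i+1})$, and reserving the bulk of the work for the $2$-fixedness claim, which is the only part that is not routine.

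First I would bound the size and construction time. Each subinstance $I(G_{i,i+1})$ consists of five PQ-trees ($T_i$, $T_{i+1}$, $T_{i,i+1}$, $T_i^+$, $T_{i+1}^-$) together with four arcs. The level trees $T_i$ and layer tree $T_{i,i+1}$ have total size $O(|V_i|+|V_{i+1}|+|E_{i,i+1}|)$, where $E_{i,i+1}=(V_i\times V_{i+1})\cap E$, and by the remark after Lemma~\ref{le:torus-characterization} each $T_{i,i+1}$ can be built in $O(|G_{i,i+1}|)$ time. Since consecutive subinstances share only the level tree $T_i$ and non-consecutive ones are disjoint, summing over $i=1,\dots,k$ gives $|I^\cup(G)|=O(\sum_i|G_{i,i+1}|)=O(|G|)$, and the same bound for the build time. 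I then need to argue that normalization preserves the $O(|G|)$ size and runs in $O(|G|)$ time. This follows because normalization replaces each target $T_j$ by an intersection with a projection of its parent, using the three linear-time PQ-tree operations from the preliminaries; each arc is processed once, and since the instance is a bounded-degree DAG of total size $O(|G|)$, the total cost telescopes to $O(|G|)$. This gives the second and third claims.

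The main obstacle is the $2$-fixedness claim, and I would handle it by a local analysis of the P-nodes tree by tree. By definition I must show $fixed(\mu)\le 2$ for every P-node $\mu$ in every tree of $I^*(G)$. The recursive definition $fixed(\mu)=\omega+\sum_{i=1}^r(fixed(\mu_i)-1)$ depends on the parents of the tree containing $\mu$ in the DAG $D$, so I would first read off the DAG structure from Fig.~\ref{fig:simpq-instance}: the consistency trees $T_i^+$ and $T_i^-$ are sinks with two parents each (a level tree and a layer tree), the layer trees $T_{i,i+1}$ are internal sources/targets, and each level tree $T_i$ is a source that is a parent of the two consistency trees $T_i^+$ and $T_i^-$ derived from it. The key is that after normalization the relevant nodes become highly constrained: the consistency trees, being universal trees on $V_i^+$ or $V_i^-$, get normalized against the projections of their two parents, and the P-nodes that survive have a small, controlled number of fixing children. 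I expect the crucial quantitative point to be that each P-node $\mu$ is fixed by at most one node in each of its (at most two) parents and that each such fixing node contributes $fixed(\mu_i)-1=0$ because it is itself a leaf-like or Q-node configuration, leaving $fixed(\mu)=\omega\le 2$.

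Concretely, I would case-split on which tree $\mu$ lives in. For a P-node in a level tree $T_i$: its parents in $D$ are the consistency trees to which $T_i$ maps, but $T_i$ is a \emph{source}, so it has no parents, the sum is empty, and $fixed(\mu)=\omega$ equals the number of children fixing $\mu$; since $T_i$ feeds only into $T_i^+$ and $T_i^-$ (two children), $\omega\le 2$. For P-nodes in the layer trees and consistency trees I would similarly count parents and fixing children using the explicit arc set, verifying in each case that $\omega+\sum(fixed(\mu_i)-1)\le 2$; the normalization step is exactly what guarantees the "exactly one fixing node" hypothesis needed for the fixedness recursion to be well-defined and for each summand $fixed(\mu_i)-1$ to vanish. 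The delicate part will be the consistency trees $T_i^+$, which have two parents ($T_i$ and $T_{i,i+1}$, and possibly $T_{i-1,i}$ via their shared structure) — I must check that a P-node there is fixed by at most the right number of nodes so that the bound of $2$ is not exceeded. I would conclude by collecting these cases into the statement that every P-node is $2$-fixed, hence $I^*(G)$ is $2$-fixed.
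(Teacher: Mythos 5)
Your size and construction-time analysis is sound and essentially the paper's own: sum the five trees and four arcs of each subinstance $I(G_{i,i+1})$, use that level and consistency trees are stars, that each layer tree $T_{i,i+1}$ is built in $O(|G_{i,i+1}|)$ time, and charge normalization $O(|T_i|+|T_j|)$ per arc, telescoping to $O(|G|)$. (The paper adds one small remark you skip---internal PQ-tree nodes have degree greater than $2$, so tree size is linear in the number of leaves---but that is routine.)

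The $2$-fixedness argument, however, has a genuine gap exactly where you flag ``the delicate part.'' First, your reading of the DAG is off: layer trees are \emph{pure sources} (not ``internal sources/targets''), each with exactly two children $T_i^+$ and $T_{i+1}^-$; and each consistency tree is a sink with \emph{exactly} two parents---$T_i^+$ has parents $T_i$ and $T_{i,i+1}$ only, while your hedge ``possibly $T_{i-1,i}$'' confuses it with $T_i^-$, a different tree. This matters: if a consistency tree really had a third parent, the fixedness bound would become $3$ and the reduction to Theorem~\ref{th:2-fixed} would collapse. Second, your proposed mechanism for the sink case is wrong. You expect $fixed(\mu)=\omega\le 2$ with each parent summand $fixed(\mu_i)-1$ vanishing because the fixed nodes are ``leaf-like or Q-node configurations.'' In fact, for a P-node $\mu$ in a sink tree we have $\omega=0$ (a sink has no children), and the parent contributions do \emph{not} vanish: the unique node $\mu_i$ fixed by $\mu$ in a parent is in general a P-node of a source tree (e.g., the center of the star $T_i$), which has $fixed(\mu_i)=\omega=2$ since the source has two children fixing it; hence each summand equals $1$ and the bound is attained exactly, $fixed(\mu)=0+(2-1)+(2-1)=2$. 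This is the paper's actual computation: every tree in $I^\cup(G)$ is either a source with exactly two children or a sink with exactly two parents, normalization preserves this, and the recursion then gives $fixed(\mu)\le 2$ in both cases. Your source case is handled correctly, but the only nontrivial case of the lemma is left unverified in your write-up and is sketched with an incorrect mechanism.
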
 

\begin{proof}
  Every PQ-tree $T$ in $I^\cup(G)$ is either a source with exactly two children or a sink with exactly two parents, and the normalization of $I^\cup(G)$ to obtain $I^*(G)$ does not alter this property. Thus every P-node in a PQ-tree $T$ in $I^*(G)$ is at most $2$-fixed.
  In fact, recall that for a P-node $\mu$ of a PQ-tree $T$ with parents $T_1,\dots,T_r$, we have that $fixed(\mu) = \omega + \sum^r_{i=1} (fixed(\mu_i)-1)$, where $\omega$ is the number
of children of $T$ fixing $\mu$, and  $\mu_i \in T_i$ is the unique node in $T_i$, with $1 \leq i \leq r$, fixed by $\mu$. Hence, if $T$ is a source PQ-tree, it holds $\omega = 2$ and $r=0$; whereas, if $T$ is a sink PQ-tree, it holds $\omega = 0$, $r=2$, and $fixed(\mu_i)=2$ for each parent $T_i$ of $T$. Therefore $I^*(G)$ is $2$-fixed.

Since every internal node of a PQ-tree in $I^*(G)$ has degree greater than $2$, to prove the bound on $|I^*(G)|$ it suffices to show that the total number of leaves of all PQ-trees in $I^*(G)$ is in $O(|G|)$. Since ${\cal L}(T_i)=V_i$ and ${\cal L}(T^-_i),{\cal L}(T^+_i)\subseteq
V_i$, the number of leaves of all level and consistency trees is at most $3\sum_{i=1}^k |V_i|\in O(|G|)$. Also, since ${\cal L}(T_{i,i+1})= (V_i \times V_{i+1}) \cap E$, the number of leaves of all layer trees is at most $\sum_{i=1}^k |(V_i \times V_{i+1}) \cap E|\in O(|G|)$. Thus $|I^*(G)|\in O(|G|)$.

We have already observed that each layer tree $T_{i,i+1}$ can be constructed in $O(|G_{i,i+1}|)$ time; level and consistency trees are stars, hence they can be constructed in linear time in the number of their leaves. Finally, the normalization of each arc $(T_i,T_j; \phi)$ can be performed in $O(|T_i|+|T_j|)$ time~\cite{br-spacep-13}. Hence, the $O(|G|)$ time bound follows.  
\end{proof}

\begin{lemma} \label{le:reduction}
Level graph $G$ admits a torus level embedding if and only if $I^*(G)$ is a positive instance of {\sc Simultaneous PQ-Ordering}.
\end{lemma}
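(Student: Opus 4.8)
The plan is to chain together the two structural results already established—Observation~\ref{prop:characterization} and Lemma~\ref{le:equivalence}—and to exploit the fact that $I^\cup(G)$ is nothing more than the union of the two-level instances $I(G_{i,i+1})$ glued along the shared level trees, while normalization preserves the set of solutions. Concretely, Observation~\ref{prop:characterization} reduces torus level planarity of the proper graph $G$ to the existence of circular orderings $\mathcal{O}_1,\dots,\mathcal{O}_k$ on $V_1,\dots,V_k$ such that, for every $i$ (indices modulo $k$), the two-level graph $G_{i,i+1}$ admits a radial level embedding realizing $\mathcal{O}_i$ on $l_i$ and $\mathcal{O}_{i+1}$ on $l_{i+1}$; and Lemma~\ref{le:equivalence}, applied with $V_i$ in the role of $V_1$ and $V_{i+1}$ in the role of $V_2$, turns each such condition into the existence of a solution of $I(G_{i,i+1})$ inducing $\mathcal{O}_i$ on $\mathcal{L}(T_i)$ and $\mathcal{O}_{i+1}$ on $\mathcal{L}(T_{i+1})$. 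The remaining work is to show that these $k$ local solutions can be made mutually consistent, which is exactly what the shared-tree structure of $I^\cup(G)$ encodes.

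For the forward direction, suppose $G$ is torus level planar. Observation~\ref{prop:characterization} yields orderings $\mathcal{O}_1,\dots,\mathcal{O}_k$ and, for each $i$, a radial level embedding of $G_{i,i+1}$ realizing $\mathcal{O}_i$ and $\mathcal{O}_{i+1}$. By Lemma~\ref{le:equivalence}, each $I(G_{i,i+1})$ then has a solution whose orderings on the two level trees $T_i$ and $T_{i+1}$ are $\mathcal{O}_i$ and $\mathcal{O}_{i+1}$. Since adjacent instances $I(G_{i-1,i})$ and $I(G_{i,i+1})$ share exactly the level tree $T_i$—on which both solutions induce the same ordering $\mathcal{O}_i$—and non-adjacent instances are disjoint, these local solutions assign a common ordering to every shared tree and therefore combine into a single solution of $I^\cup(G)$. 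As $I^*(G)$ is obtained from $I^\cup(G)$ by normalization, which does not change the set of solutions~\cite{br-spacep-13}, this is also a solution of $I^*(G)$, so $I^*(G)$ is positive.

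The reverse direction runs the same chain backwards. A solution of $I^*(G)$ is, again because normalization preserves solutions~\cite{br-spacep-13}, a solution of $I^\cup(G)$; let $\mathcal{O}_i$ denote the ordering it induces on each level tree $T_i$. Restricting this solution to the trees and arcs of a single $I(G_{i,i+1})$—namely the two level trees $T_i,T_{i+1}$ together with the layer and consistency trees private to that subinstance—gives a solution of $I(G_{i,i+1})$ inducing $\mathcal{O}_i$ and $\mathcal{O}_{i+1}$ on $T_i$ and $T_{i+1}$. By Lemma~\ref{le:equivalence}, $G_{i,i+1}$ then admits a radial level embedding realizing $\mathcal{O}_i$ and $\mathcal{O}_{i+1}$. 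Because the same ordering $\mathcal{O}_i$ is used by both subinstances meeting at level $i$, the family $\mathcal{O}_1,\dots,\mathcal{O}_k$ satisfies the hypothesis of Observation~\ref{prop:characterization}, and hence $G$ is torus level planar.

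The two facts that make the gluing work are the only delicate points, and I expect the second to be the main obstacle to a fully rigorous write-up. The first is that solutions of the union instance $I^\cup(G)$ correspond precisely to tuples of solutions of the $I(G_{i,i+1})$ that agree on the shared level trees; this follows cleanly from the observation that adjacent subinstances share only a level tree and non-adjacent ones share nothing, so no arc couples two subinstances except through the ordering of a common $T_i$. The second, which requires more care, is that passing from $I^\cup(G)$ to the normalized instance $I^*(G)$ neither creates nor destroys solutions. This is a property of the normalization operation of~\cite{br-spacep-13}: in any solution the ordering of a child $T_j$ along an arc $(T_i,T_j;\phi)$ must be a suborder of that of its parent $T_i$, hence already lies in $T_i|_{\phi(\mathcal{L}(T_j))}\cap T_j$, so replacing $T_j$ by this tree leaves every solution intact while only restricting the admissible orderings. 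Invoking this cited equivalence closes the argument.
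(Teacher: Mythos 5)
Your proof is correct and follows essentially the same route as the paper's: both directions chain Observation~\ref{prop:characterization} with Lemma~\ref{le:equivalence} and glue the local solutions of the subinstances $I(G_{i,i+1})$ along the shared level trees $T_i$. Your explicit verification that normalization neither creates nor destroys solutions (any ordering satisfying an arc $(T_i,T_j;\phi)$ already lies in $\mathcal{O}(T_i|_{\phi(\mathcal{L}(T_j))}\cap T_j)$) is a point the paper leaves implicit, deferring to~\cite{br-spacep-13}, so it is a welcome refinement rather than a departure.
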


\begin{proof}
Suppose that $G$ admits a torus level embedding $\Gamma$. For $i=1,\dots,k$, let ${\cal O}_i$ be the circular ordering on $V_i$ along $l_i$. By Observation~\ref{prop:characterization}, embedding $\Gamma$ determines a radial level embedding $\Gamma_{i,i+1}$ of $G_{i,i+1}$. By Lemma~\ref{le:equivalence}, for $i=1,\dots,k$, there exists a solution for the instance $I(G_{i,i+1})$ of {\sc Simultaneous PQ-Ordering} in which the circular ordering on ${\cal L}(T_i)$ (${\cal L}(T_{i+1})$) is ${\cal O}_i$ (resp.\ ${\cal O}_{i+1}$). Since the circular ordering on ${\cal L}(T_i)$ is ${\cal O}_i$ both in $I(G_{i-1,i})$ and $I(G_{i,i+1})$ and since each arc of $I^*(G)$ is satisfied as it belongs to exactly one instance $I(G_{i,i+1})$, which is a positive instance of {\sc Simultaneous PQ-Ordering}, it follows that the circular orderings deriving from instances $I(G_{i,i+1})$ define a solution for $I^*(G)$. 

Suppose that $I^*(G)$ admits a solution. Let ${\cal O}_1,\dots,{\cal O}_k$ be the circular orderings on the leaves of the level trees $T_1,\dots,T_k$ in this solution. By Lemma~\ref{le:equivalence}, for each $i=1,\dots, k$ with $k+1=1$, there exists a radial level embedding of level graph
$G_{i,i+1}$ in which the circular orderings on $V_i$ along $l_i$ and $V_{i+1}$ along $l_{i+1}$ are ${\cal O}_{i}$ and ${\cal O}_{i+1}$, respectively. By Observation~\ref{prop:characterization}, $G$ is torus level planar. 
\end{proof}

We thus get the main result of this paper.

\begin{theorem}\label{th:torus}
{\sc Torus Level Planarity} can be tested in quadratic (quartic) time for proper (non-proper) instances.
\end{theorem}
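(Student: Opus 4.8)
The plan is to combine the reduction to {\sc Simultaneous PQ-Ordering} developed above with the quadratic-time algorithm of Theorem~\ref{th:2-fixed}, and then to reduce the non-proper case to the proper one by a standard subdivision (\emph{properization}) argument.

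For \emph{proper} instances the argument is essentially a chain of the results already established. By Lemma~\ref{le:reduction}, a proper level graph $G$ is torus level planar if and only if the instance $I^*(G)$ of {\sc Simultaneous PQ-Ordering} is positive. By Lemma~\ref{le:construction}, $I^*(G)$ is $2$-fixed, has size $O(|G|)$, and can be built in $O(|G|)$ time. Hence I would invoke Theorem~\ref{th:2-fixed} on $I^*(G)$: since the test runs in time quadratic in the instance size, this costs $O(|I^*(G)|^2) = O(|G|^2)$, which dominates the $O(|G|)$ construction time and yields the claimed quadratic bound. A concrete torus level embedding can moreover be recovered from a positive solution by reading off the orderings ${\cal O}_1,\dots,{\cal O}_k$ on the level trees and applying Lemma~\ref{le:torus-characterization} to each consecutive level pair, via Observation~\ref{prop:characterization}.

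For \emph{non-proper} instances the plan is to transform $G=(\bigcup_{i=1}^k V_i, E, \gamma)$ into an equivalent proper instance $G'$ by subdividing every non-proper edge. For each edge $(u,v) \in E$ I would insert a dummy vertex on each level strictly between $\gamma(u)$ and $\gamma(v)$ in the cyclic sense $\gamma(u),\gamma(u)+1,\dots,\gamma(v)$ (with $k+1=1$) dictated by monotonicity, replacing $(u,v)$ by the corresponding directed path; the resulting graph $G'$ is proper. Since a monotone edge spans at most $k$ levels, each edge contributes at most $k-1$ dummy vertices, so $|G'| = O(|V| + k|E|)$. Using $k \le |V| \le |G|$ and $k|E| \le |V|\,|E| \le |G|^2$, we obtain $|G'| = O(|G|^2)$. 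The equivalence ``$G$ is torus level planar $\iff$ $G'$ is torus level planar'' holds because subdividing a monotone edge at its intersections with the intermediate levels turns a torus level embedding of $G$ into one of $G'$, while contracting each subdivision path reverses the process. Running the proper-instance algorithm on $G'$ then costs $O(|G'|^2) = O(|G|^4)$, establishing the quartic bound.

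I expect the main obstacle to lie in the non-proper case, and specifically in the care needed for the properization on $\mathbb{T}$: one must check that the monotone direction $\gamma(u)\to\gamma(v)$ is well-defined and used consistently, so that wrap-around edges are subdivided along the correct cyclic sequence of levels and the resulting path is monotone with exactly one dummy vertex on each prescribed level. The subdivide/contract correspondence makes the embedding equivalence intuitively clear, but it rests on the defining property that a monotone edge meets each of the levels $\gamma(u),\dots,\gamma(v)$ exactly once, which is precisely what guarantees a genuine proper path; verifying this correspondence rigorously, together with the $O(|G|^2)$ size bound on $G'$, is the only content genuinely beyond assembling the preceding lemmata.
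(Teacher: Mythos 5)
Your proof is correct and follows essentially the same route as the paper: the proper case chains Lemmata~\ref{le:construction} and~\ref{le:reduction} with Theorem~\ref{th:2-fixed} to get the $O(|G|^2)$ bound, and the non-proper case is handled by subdividing each long edge at its intermediate levels, yielding a proper instance of size $O(|G|^2)$ and hence the quartic bound. Your added care about the cyclic sequence of levels for wrap-around edges (with $k+1=1$) is a legitimate detail that the paper's terse subdivision step leaves implicit, but it does not change the argument.
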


\begin{proof}
Consider any instance $G$ of {\sc Torus Level Planarity}. Assume first that $G$ is proper. By Lemmata~\ref{le:construction} and~\ref{le:reduction}, a $2$-fixed instance $I^*(G)$ of {\sc Simultaneous PQ-Ordering} equivalent to $G$ can be constructed in linear time with $|I^*(G)| \in O(|G|)$. By Theorem~\ref{th:2-fixed} instance $I^*(G)$ can be tested in quadratic time.

If $G$ is not proper, then subdivide every edge $(u,v)$ that spans $h>2$ levels with $h-2$ vertices, assigned to levels $\gamma(u)+1,\gamma(u)+2,\dots,\gamma(v)-1$. This increases the size of the graph at most quadratically, and the time bound follows.
\end{proof}

Theorem~\ref{th:torus} and Lemma~\ref{le:cyclic-torus} imply the following result.

\begin{theorem}\label{co:cyclicLevel-polynomial}
{\sc Cyclic Level Planarity} can be solved in quadratic (quartic) time for proper (non-proper) instances.
\end{theorem}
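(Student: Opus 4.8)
The plan is to obtain {\sc Cyclic Level Planarity} as an immediate corollary of the machinery already developed for {\sc Torus Level Planarity}, rather than by building a fresh reduction. The key observation is that Lemma~\ref{le:cyclic-torus} gives a linear-time reduction from {\sc Cyclic Level Planarity} to {\sc Torus Level Planarity}, and Theorem~\ref{th:torus} gives a quadratic-time (resp.\ quartic-time) test for proper (resp.\ non-proper) instances of the latter. So the entire argument is the composition of these two results, and the only real content is verifying that the reduction preserves the proper/non-proper distinction and does not blow up the running time.

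First I would recall the reduction of Lemma~\ref{le:cyclic-torus}: given an instance $G=(V,E,\gamma)$ of {\sc Cyclic Level Planarity} on $k$ levels, we form $G'$ by adding a directed cycle $(w_1,\dots,w_k,w_1)$ with $\gamma'(w_i)=i$, and $G$ is cyclic level planar if and only if $G'$ is torus level planar. I would then note that $|G'|\in O(|G|)$, since we add only $k\le|V|$ vertices and $k$ edges, and that this construction is computable in linear time. Finally, one feeds $G'$ into the algorithm of Theorem~\ref{th:torus}.

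The one point that requires a word of care is the timing distinction. If $G$ is a proper instance of {\sc Cyclic Level Planarity}, then every edge of $G$ either connects consecutive levels or wraps from level $k$ to level $1$; the added cycle edges $(w_i,w_{i+1})$ and $(w_k,w_1)$ are likewise between consecutive levels, so $G'$ is a proper instance of {\sc Torus Level Planarity} and Theorem~\ref{th:torus} tests it in quadratic time. If $G$ is non-proper, then $G'$ is non-proper as well, and the quartic bound applies; here I would point out that the subdivision step inside the proof of Theorem~\ref{th:torus} already absorbs the non-properness, so no additional preprocessing is needed and the claimed quartic bound carries over directly.

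I do not expect a genuine obstacle in this proof, since all the substantive work has been done in Lemma~\ref{le:cyclic-torus} and Theorem~\ref{th:torus}. The only thing to be mindful of is to confirm that the reduction of Lemma~\ref{le:cyclic-torus} maps proper instances to proper instances and non-proper to non-proper, so that the two separate time bounds transfer cleanly; this is the mild verification step rather than a hard obstacle. A complete write-up is therefore just the sentence ``Theorem~\ref{th:torus} and Lemma~\ref{le:cyclic-torus} imply the result,'' together with the observation on properness.
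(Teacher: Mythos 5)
Your proposal is correct and is exactly the paper's own argument: the paper proves Theorem~\ref{co:cyclicLevel-polynomial} with the single sentence that Theorem~\ref{th:torus} and Lemma~\ref{le:cyclic-torus} imply it. Your added check that the reduction maps proper instances to proper ones (since each edge $(w_i,w_{i+1})$ and $(w_k,w_1)$ satisfies the properness condition) and preserves linear size is a worthwhile detail the paper leaves implicit, but it does not change the route.
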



Our techniques allow us to solve a more general problem, that we call {\sc Torus $\cal T$-Level Planarity}, in which a level graph $G=(\bigcup^k_{i=1} V_i,E,\gamma)$ is given together with a set of PQ-trees ${\cal T}=\{\overline{T}_1,\dots,\overline{T}_k\}$ such that ${\cal L}(\overline{T}_i)= V_i$, where each tree $\overline{T}_i$ encodes consecutivity constraints on the ordering on $V_i$ along $l_i$. The goal is then to test the existence of a level embedding of $G$ on $\mathbb T$ in which the circular ordering on $V_i$ along $l_i$ belongs to ${\cal O} (\overline{T}_i)$. 
This problem has been studied in the plane~\cite{tibp-addfr-15,wsp-gktlg-12} under the name of {\sc $\cal T$-Level Planarity}; it is NP-hard in general and polynomial-time solvable for proper instances. While the former result implies the NP-hardness of {\sc Torus $\cal T$-Level Planarity}, the techniques of this paper show that {\sc Torus $\cal T$-Level Planarity} can be solved in polynomial time for proper instances. Namely, in the construction of instance $I^*(G)$ of {\sc Simultaneous PQ-Ordering}, it suffices to replace level tree $T_i$ with PQ-tree $\overline{T}_i$. Analogous considerations allow us to extend this result to {\sc Radial $\cal T$-Level Planarity} and {\sc Cyclic $\cal T$-Level Planarity}.

\section{Simultaneous Level Planarity} \label{se:simultaneous}

In this section we prove that  {\sc Simultaneous Level Planarity} is \NPC for two graphs on three levels and for three graphs on two levels, while it is polynomial-time solvable for two graphs on two levels. 

Both NP-hardness proofs rely on a reduction from the \NPC problem {\sc Betweenness}~\cite{o-top-79}, which asks for a ground set $S$ and a set $X$ of ordered triplets of $S$, with $|S|=n$ and $|X|=k$, whether a linear order $\prec$ of $S$ exists such that, for any $(\alpha,\beta,\gamma) \in X$, it holds true that $\alpha \prec \beta \prec \gamma$ or that $\gamma \prec \beta \prec \alpha$. Both proofs exploit the following gadgets. 

The {\em ordering gadget} is a pair \sefeinstance{} of level graphs on levels $l_1$ and $l_2$, where $l_1$ contains $nk$ vertices $u_{i,j}$, with $i=1,\dots,k$ and $j=1,\dots,n$, and $l_2$ contains $n (k-1)$ vertices $v_{i,j}$, with $i=1,\dots,k-1$ and $j=1,\dots,n$. For $i=1,\dots,k-1$ and $j=1,\dots,n$, \Gr{} contains edge $(u_{i,j},v_{i,j})$ and \Gb{} contains edge $(u_{i+1,j},v_{i,j})$. See \Gr{} and \Gb{} in Fig.~\ref{fig:simultaneous-level}(a). Consider any simultaneous level embedding $\Gamma$ of \sefeinstance{} and assume, w.l.o.g. up to a renaming, that $u_{1,1},\dots,u_{1,n}$ appear in this left-to-right order along $l_1$.

\begin{lemma} \label{le:sl-ordering}
For every $i=1,\dots,k$, vertices $u_{i,1},\dots,u_{i,n}$ appear in this left-to-right order along $l_1$ in~$\Gamma$; also, for every $i=1,\dots,k-1$, vertices $v_{i,1},\dots,v_{i,n}$ appear in this left-to-right order along~$l_2$ in~$\Gamma$. 
\end{lemma}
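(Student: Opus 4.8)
The plan is to prove the two statements together by induction on $i$, propagating the assumed left-to-right order from $l_1$ one index at a time, first down to $l_2$ and then back up to the next index. The engine of the argument is the elementary fact that a set of pairwise non-crossing $y$-monotone edges forming a matching between points on $l_1$ and points on $l_2$ forces the two endpoint orders to coincide: two such edges $(a,b)$ and $(a',b')$, with $a,a'$ on $l_1$ and $b,b'$ on $l_2$, cross if and only if the horizontal order of $a,a'$ disagrees with that of $b,b'$, so being crossing-free is equivalent to the two orders matching. I would record this as a one-line preliminary observation and then invoke it twice per inductive step, once for each color.

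For the base case $i=1$ the vertices $u_{1,1},\dots,u_{1,n}$ appear left-to-right by the assumed renaming. For the inductive step, suppose $u_{i,1},\dots,u_{i,n}$ appear in this order along $l_1$ for some $i\le k-1$. I first consider the edges $(u_{i,j},v_{i,j})$, $j=1,\dots,n$, which belong to $G_1$; since all edges of $G_1$ are pairwise non-crossing in $\Gamma$, so is this subset, and it forms a matching between $u_{i,*}$ and $v_{i,*}$. By the matching observation the order of $v_{i,1},\dots,v_{i,n}$ along $l_2$ equals that of $u_{i,1},\dots,u_{i,n}$ along $l_1$, hence is left-to-right; this yields the $v$-part of the claim at index $i$. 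I then consider the edges $(u_{i+1,j},v_{i,j})$, $j=1,\dots,n$, which belong to $G_2$ and are therefore likewise a non-crossing matching, now between $u_{i+1,*}$ and $v_{i,*}$; the same observation transfers the left-to-right order of $v_{i,1},\dots,v_{i,n}$ to $u_{i+1,1},\dots,u_{i+1,n}$, completing the step.

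I expect no genuine obstacle. The only points meriting a word of care are that one must restrict attention to the edges of a fixed index $i$ and use that a subset of a non-crossing edge set is non-crossing, and that the index ranges differ: $u$-vertices run over $i=1,\dots,k$ while $v$-vertices run over $i=1,\dots,k-1$, so the $u$-order at index $i+1$ is obtained from the $v$-order at index $i$, and the last index $i=k$ is reached via the $G_2$-matching at $i=k-1$. Notably, since red edges are compared only with red edges and blue only with blue, the simultaneity of the embedding plays no role: each color behaves exactly as an ordinary single-graph level-planar matching.
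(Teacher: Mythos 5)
Your proof is correct and is essentially the paper's argument recast from contrapositive to inductive form: the paper picks the smallest index $k^*$ at which the order first fails and exhibits a crossing between two edges of $G_2$ (Case A) or of $G_1$ (Case B), which are precisely your two half-steps (red edges transfer the order from $u_{i,\ast}$ to $v_{i,\ast}$, blue edges from $v_{i,\ast}$ to $u_{i+1,\ast}$), both resting on the same observation that a non-crossing monotone matching between two levels forces the endpoint orders to agree. One small wording caution: your ``cross if and only if the orders disagree'' is an overstatement, since monotone curves with agreeing endpoint orders may still cross an even number of times, but only the true direction (disagreeing orders force a crossing) is used, so nothing breaks.
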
 

\begin{proof}
	Suppose, for a contradiction, that the statement does not hold. Then let $k^*$ be the smallest index such that either: 
	
	\begin{itemize}
		\item[(A)] for every $i=1,\dots,k^*-1$, vertices $u_{i,1},\dots,u_{i,n}$ appear in this left-to-right order along $l_1$; for every $i=1,\dots,k^*-1$, vertices $v_{i,1},\dots,v_{i,n}$ appear in this left-to-right order along $l_2$; and vertices $u_{k^*,1},\dots,u_{k^*,n}$ do not appear in this left-to-right order along $l_1$; or 
		\item[(B)] for every $i=1,\dots,k^*$, vertices $u_{i,1},\dots,u_{i,n}$ appear in this left-to-right order along $l_1$; for every $i=1,\dots,k^*-1$, vertices $v_{i,1},\dots,v_{i,n}$ appear in this left-to-right order along $l_2$; and vertices $v_{k^*,1},\dots,v_{k^*,n}$ do not appear in this left-to-right order along $l_2$.
	\end{itemize}
	
	Suppose that we are in Case (A), as the discussion for Case (B) is analogous. Then $v_{k^*-1,1},\dots,v_{k^*-1,n}$ appear in this left-to-right order along $l_2$, while $u_{k^*,1},\dots,u_{k^*,n}$ do not appear in this left-to-right order along $l_1$. Hence, there exist indices $i$ and $j$ such that $v_{k^*-1,i}$ is to the left of $v_{k^*-1,j}$ along $l_2$, while $u_{k^*,i}$ is to the right of $u_{k^*,j}$ along $l_1$. Hence, edges $(u_{k^*,i},v_{k^*-1,i})$ and  $(u_{k^*,j},v_{k^*-1,j})$ cross, thus contradicting the assumption that $\Gamma$ is a simultaneous level embedding, as they both belong to \Gb{}. 
\end{proof}

The {\em triplet gadget} is a path $T=(w_1,\dots,w_5)$ on two levels, where $w_1$, $w_3$, and $w_5$ belong to a level $l_i$ and $w_2$ and $w_4$ belong to a level $l_j\neq l_i$. See $\green{G_3}$ in Fig.~\ref{fig:simultaneous-level}(a), with $i=1$ and $j=2$. We have the following.

\begin{lemma} \label{le:sl-triplet}
In every level embedding of $T$, vertex $w_3$ is between $w_1$ and $w_5$ along~$l_i$.
\end{lemma}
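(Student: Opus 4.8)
The plan is to analyze the structure of the triplet gadget $T=(w_1,\dots,w_5)$, a path that alternates between two levels $l_i$ and $l_j$ with $i\neq j$. The vertices $w_1,w_3,w_5$ lie on $l_i$ and $w_2,w_4$ lie on $l_j$. I want to show that in any level embedding the middle vertex $w_3$ of level $l_i$ is forced to lie between $w_1$ and $w_5$ along $l_i$. The key observation is that a level embedding is crossing-free, so the two edges of the path incident to $w_3$, namely $(w_2,w_3)$ and $(w_3,w_4)$, must not cross the other two edges $(w_1,w_2)$ and $(w_4,w_5)$.

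First I would set up notation for the left-to-right order of the relevant vertices along $l_i$ and $l_j$. I would argue by contradiction: suppose $w_3$ is \emph{not} between $w_1$ and $w_5$ along $l_i$, meaning $w_3$ lies either to the left of both or to the right of both (up to the symmetry of reversing the line). Consider, say, the case where $w_3$ lies to the left of both $w_1$ and $w_5$. The path visits $w_1,w_2,w_3,w_4,w_5$ in order, and the two internal vertices $w_2,w_4$ on $l_j$ are connected so as to ``bridge'' across. The core of the argument is a planarity/crossing count: with $w_3$ extremal on $l_i$, one of the connecting edges is forced to separate $w_2$ from $w_4$ in a way that makes the subpath from $w_3$ reach back over an already-placed edge, producing an unavoidable crossing between one of $\{(w_2,w_3),(w_3,w_4)\}$ and one of $\{(w_1,w_2),(w_4,w_5)\}$.

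The cleanest way to carry this out is to treat $T$ as an open curve traced by the path and use the fact that on the two-level strip between $l_i$ and $l_j$ a level embedding is essentially a planar drawing of the path with both endpoints' levels fixed; the left-to-right orders on $l_i$ and $l_j$ determine the drawing's combinatorics. I would enumerate the possible relative positions of $w_1,w_3,w_5$ on $l_i$ (there are three essentially distinct ones up to reflection: $w_3$ in the middle, $w_3$ leftmost, $w_3$ rightmost) and show that the two non-middle positions each force a crossing, while the middle position is realizable. This is a small finite case check rather than a computation.

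The main obstacle I anticipate is making the crossing argument rigorous rather than merely pictorial: I need to convert ``$w_3$ is extremal forces a crossing'' into a precise statement about the interleaving of edge endpoints, in the spirit of the crossing argument already used in the proof of Lemma~\ref{le:sl-ordering}, where two edges with endpoints in the order $v$-to-the-left-but-$u$-to-the-right are shown to cross. Concretely, I would identify the two edges that interleave: if $w_3$ is extremal, then along $l_i$ the order of $\{w_1,w_3,w_5\}$ together with the order of $\{w_2,w_4\}$ along $l_j$ forces the pair of edges incident to $w_3$ to straddle the pair incident to $w_1$ or $w_5$, and I would invoke the same ``endpoints appear in crossed order along the two levels, hence the monotone edges cross'' principle to derive the contradiction. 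Once the extremal cases are excluded, the betweenness of $w_3$ follows immediately.
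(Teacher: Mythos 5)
Your proposal is correct and follows essentially the same route as the paper: assume for contradiction that $w_3$ is extremal on $l_i$ (reducing to one side by symmetry), split on the left-to-right order of $w_2$ and $w_4$ along $l_j$, and in each case exhibit a pair of $y$-monotone edges with interleaved endpoints --- $(w_1,w_2)$ with $(w_3,w_4)$, or $(w_3,w_2)$ with $(w_5,w_4)$ --- that must cross, exactly the crossing pairs the paper identifies. Your extra remarks (checking realizability of the middle position, formalizing the ``interleaved endpoints force a crossing'' principle) are harmless refinements of the same argument, so no substantive difference to report.
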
 

\begin{proof}
	Suppose, for a contradiction, that $w_3$ is to the left of $w_1$ and $w_5$ along $l_i$; the case in which it is to their right is analogous. Also assume that $l_i$ is below $l_j$, as the other case is symmetric. If $w_2$ is to the left of $w_4$ along $l_j$, then edges $(w_1,w_2)$ and $(w_3,w_4)$ cross, otherwise edges $(w_3,w_2)$ and $(w_5,w_4)$ cross. In both cases we have a contradiction.
\end{proof}

We are now ready to prove the claimed {\cal NP}-completeness results.

\begin{figure}[tb]
\centering
\subfloat[]{\includegraphics[scale=1]{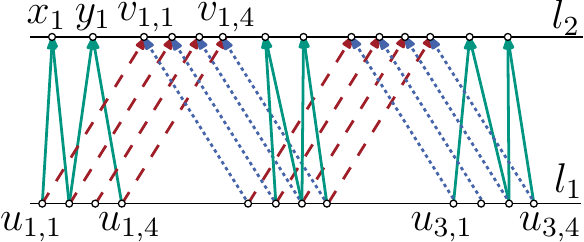}} 
\subfloat[]{\includegraphics[scale=1]{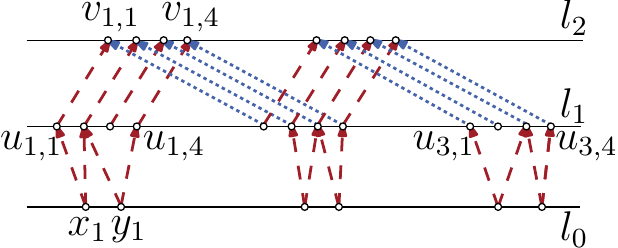}} 
\caption{Instances (a) $\langle \red{G_1},\blue{G_2},\green{G_3} \rangle$ and (b) \sefeinstance{} corresponding to an instance of {\sc Betweenness} with $X=\{(u_{1,1},u_{1,2},u_{1,4}),(u_{1,2},u_{1,3},u_{1,4}),(u_{1,1},u_{1,3},u_{1,4})\}$.} 
\label{fig:simultaneous-level}
\end{figure}

\begin{theorem} \label{thm:sim-level-npc-two-levels} {\sc Simultaneous Level Planarity} is \NPC even for three graphs on two levels and for two graphs on three levels.
\end{theorem}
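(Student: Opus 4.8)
The plan is to give two polynomial-time reductions from {\sc Betweenness}, one to {\sc Simultaneous Level Planarity} for three graphs on two levels and one for two graphs on three levels, reusing the ordering gadget (Lemma~\ref{le:sl-ordering}) and the triplet gadget (Lemma~\ref{le:sl-triplet}) introduced above. In both reductions the ordering gadget will serve to force, inside any simultaneous level embedding, a single canonical left-to-right order of the $n$ ``element'' vertices on each copy of the ground set $S$, so that the linear ordering being sought in {\sc Betweenness} is precisely encoded by the left-to-right positions of these vertices; the triplet gadget will then enforce, for each betweenness constraint $(\alpha,\beta,\gamma)\in X$, that $\beta$ lies between $\alpha$ and $\gamma$ in that order. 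Membership in \NP is immediate, since a simultaneous level embedding is certified by the left-to-right orderings on each level, whose consistency with all graphs can be checked in polynomial time; so the work is entirely in the two hardness constructions.

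**Three graphs on two levels.** First I would build the instance $\langle\red{G_1},\blue{G_2},\green{G_3}\rangle$ on levels $l_1,l_2$. Graphs \Gr{} and \Gb{} are exactly the ordering gadget: for each of the $k=|X|$ triplets I create a fresh row $u_{i,1},\dots,u_{i,n}$ of element vertices on $l_1$ together with the auxiliary vertices $v_{i,j}$ on $l_2$, linked by the matching edges of \Gr{} and \Gb{}. By Lemma~\ref{le:sl-ordering}, once $u_{1,1},\dots,u_{1,n}$ are fixed (w.l.o.g.) in left-to-right order, every row $u_{i,1},\dots,u_{i,n}$ inherits that same order; thus all $k$ copies of $S$ are forced into one common permutation, which is the linear order $\prec$. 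I then use \Gg{} to realize the triplet gadget once per constraint: for the $i$-th triplet $(\alpha,\beta,\gamma)\in X$ I add a path $w_1,\dots,w_5$ with $w_1,w_3,w_5$ identified with the $l_1$-vertices $u_{i,\alpha},u_{i,\beta},u_{i,\gamma}$ and $w_2,w_4$ placed on $l_2$. Lemma~\ref{le:sl-triplet} then forces $u_{i,\beta}$ to lie between $u_{i,\alpha}$ and $u_{i,\gamma}$, i.e.\ $\alpha\prec\beta\prec\gamma$ or $\gamma\prec\beta\prec\alpha$. The two directions of correctness follow: a satisfying betweenness order yields a planar placement respecting all three graphs, and conversely any simultaneous level embedding induces, via the forced common order, a valid betweenness order.

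**Two graphs on three levels.** For the variant $\sefeinstance{}$ on three levels $l_1,l_2,l_3$ (Fig.~\ref{fig:simultaneous-level}(b)), the idea is the same but I must fold the role of the third graph into an extra level. I would place the ordering-gadget element vertices on $l_1$ and its auxiliary vertices on $l_2$ as before, so that \Gr{} and \Gb{} jointly enforce a common order of $S$ on $l_1$ by Lemma~\ref{le:sl-ordering}; I would then realize the triplet gadget for each constraint using levels $l_2$ and $l_3$ (or $l_1$ and $l_2$), distributing its path edges between the two graphs so that one graph alone carries the betweenness-enforcing path while the other carries the order-propagation edges. The crux is to route the triplet paths so that Lemma~\ref{le:sl-triplet} applies to the element vertices on $l_1$ without the two graphs being allowed to ``cheat'' by reordering a level locally. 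The arguments for both directions are as in the three-graph case.

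**Main obstacle.** The routine parts — \NP-membership and the ``easy'' direction building an embedding from a satisfying order — are straightforward. The delicate point is ensuring the gadgets compose correctly: in the two-graphs-on-three-levels reduction the triplet gadget and the ordering gadget must share levels, so I must verify that distributing the gadget edges across only two graphs still forces both the global common ordering \emph{and} each local betweenness relation simultaneously, with no unintended crossings and no spurious freedom that would let a ``no'' instance of {\sc Betweenness} map to a positive instance. Getting the edge-to-graph assignment and the level assignment of $w_2,w_4$ exactly right, so that both lemmata remain applicable after the gadgets overlap, is where the real care is needed.
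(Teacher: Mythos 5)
Your reduction for three graphs on two levels is exactly the paper's: the ordering gadget realized by $\red{G_1}$ and $\blue{G_2}$ with one row of $S$ per constraint, and $\green{G_3}$ carrying one triplet gadget per constraint anchored at the vertices $u_{i,a_i},u_{i,b_i},u_{i,c_i}$ of row $i$ (the per-row anchoring is important, since it is what lets the $k$ triplet paths of $\green{G_3}$ be drawn pairwise non-crossing in the sufficiency direction); together with the routine \NP-membership argument, that half is correct.

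The genuine gap is in the two-graphs-on-three-levels half, where the specific options you float do not work. First, your suggestion that ``one graph alone carries the betweenness-enforcing path while the other carries the order-propagation edges'' fails: the ordering gadget forces a common row order precisely \emph{because} its two matchings lie in different graphs and are therefore allowed to cross each other. If you merge both matchings into a single graph, their union is a set of $n$ vertex-disjoint zigzag paths $u_{1,j},v_{1,j},u_{2,j},v_{2,j},\dots$; within one graph these paths must be pairwise non-crossing, and since each is connected and touches both levels, they must appear in parallel, which forces the layout on $l_1$ to be grouped by \emph{element} (all of $u_{1,j},\dots,u_{k,j}$ consecutive) rather than by row. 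Under that layout the anchor triples of distinct triplet gadgets interleave along $l_1$ whenever two constraints of $X$ overlap (e.g., $(s_1,s_2,s_3)$ and $(s_2,s_3,s_4)$), and two vertex-disjoint paths of the same graph with interleaved anchors are then forced to cross; so satisfiable {\sc Betweenness} instances would map to negative instances and the sufficiency direction breaks. Second, your alternative of realizing the triplet gadgets ``on $l_1$ and $l_2$'' also fails: whichever of the two graphs hosts a triplet path anchored at $u$-vertices with middles on $l_2$, its edges must sweep across that same graph's own matching edges in the strip between $l_1$ and $l_2$, producing forbidden same-graph crossings.

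A construction that does work (and is what the paper's Fig.~\ref{fig:simultaneous-level}(b) is getting at) keeps the ordering gadget split across the two graphs exactly as in the three-graph case and reserves the third level \emph{exclusively} for the middle vertices $x_i,y_i$ of the triplet gadgets, arranging the levels so that the strip used by the triplet paths is disjoint from the strip used by the ordering edges: either place the element rows on the middle level (ordering gadget between the middle and one outer level, triplets anchored at the $u$-vertices with middles on the other outer level), or anchor the $i$-th triplet at the auxiliary vertices $v_{i,a_i},v_{i,b_i},v_{i,c_i}$ on $l_2$ with middles on $l_3$, adding one extra row so that $k$ auxiliary rows exist; all triplet paths then go into one of the two graphs. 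Since the two strips are disjoint, no same-graph interference arises, Lemma~\ref{le:sl-triplet} applies to each triplet path, and -- in the second variant -- Lemma~\ref{le:sl-ordering} transfers the betweenness constraint from the $v$-row order to the common order of $S$. This transfer step is also missing from your sketch: you claim Lemma~\ref{le:sl-triplet} ``applies to the element vertices on $l_1$'' while placing the gadget on levels $l_2$ and $l_3$, which is inconsistent as stated, since a gadget on those levels constrains the order of its anchors on $l_2$, not $l_1$.
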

\begin{proof}
Both problems clearly are in \NP. We prove the \NP-hardness only for three graphs on two levels (see Fig.~\ref{fig:simultaneous-level}(a)), as the other proof is analogous (see Fig.~\ref{fig:simultaneous-level}(b)). We construct an instance $\langle \red{G_1(V,E_1,\gamma)},\blue{G_2(V,E_2,\gamma)},\green{G_3(V,E_3,\gamma)} \rangle$ of {\sc Simultaneous Level Planarity} from an instance $(S=\{u_{1,1},\dots,u_{1,n}\},X=\{(u_{1,a_i},u_{1,b_i},u_{1,c_i}):i=1,\dots,k\})$ of {\sc Betweenness} as follows: Pair \sefeinstance{} contains an ordering gadget on levels $l_1$ and $l_2$, where the vertices $u_{1,1},\dots,u_{1,n}$ of \Gr{} are (in bijection with) the elements of $S$. Graph \Gg{} contains $k$ triplet gadgets $T_i(u_{i,a_i},x_i,u_{i,b_i},y_i,u_{i,c_i})$, for $i=1,\dots,k$. Vertices $x_1,y_1,\dots,x_k,y_k$ are all distinct and are on $l_2$. Clearly, the construction can be carried out in linear time.  We prove the equivalence of the two instances.

$(\Longrightarrow)$ Suppose that a simultaneous level embedding $\Gamma$ of \sefethreeinstance{} exists. We claim that the left-to-right order of $u_{1,1},\dots,u_{1,n}$ along $l_1$ satisfies the betweenness constraints in $X$. Suppose, for a contradiction, that there exists a triplet $(u_{1,a_i},u_{1,b_i},u_{1,c_i})\in X$ with $u_{1,b_i}$ not between $u_{1,a_i}$ and $u_{1,c_i}$ along $l_1$. By Lemma~\ref{le:sl-ordering}, $u_{i,b_i}$ is not between $u_{i,a_i}$ and $u_{i,c_i}$. By Lemma~\ref{le:sl-triplet}, we have that $T_i(u_{i,a_i},x_i,u_{i,b_i},y_i,u_{i,c_i})$ is not planar in  $\Gamma$, a contradiction. 

$(\Longleftarrow)$ Suppose that $(S,X)$ is a positive instance of {\sc Betweenness}, and assume, w.l.o.g. up to a renaming, that $\prec:=u_{1,1},\dots,u_{1,n}$ is a solution for $(S,X)$.
Construct a straight-line simultaneous level planar drawing of \sefethreeinstance{} with:
\begin{inparaenum}[(i)]
\item vertices $u_{1,1},\dots,u_{1,n},$ $\dots, u_{k,1},$ $\dots,u_{k,n}$ in this
  left-to-right order along $l_1$, 
\item  vertices $v_{1,1},\dots,v_{1,n},\dots,v_{k-1,1},$ $\dots,v_{k-1,n}$ in this
  left-to-right order along $l_2$,
\item  vertices $x_i$ and $y_i$ to the left
  of vertices $x_{i+1}$ and $y_{i+1}$, for $i=1,\dots,k-1$, and
\item vertex $x_i$ to the left of vertex $y_i$ if and only if $u_{1,a_i}\prec u_{1,c_i}$.
\end{inparaenum}

Properties (i) and (ii) guarantee that, for any two edges $(u_{i,j},v_{i,j})$ and $(u_{i',j'},v_{i',j'})$, vertex $u_{i,j}$ is to the left of $u_{i',j'}$ along $l_1$ if and only if $v_{i,j}$ is to the left of $v_{i',j'}$ along $l_2$, which implies the planarity of \Gr{} in $\Gamma$. The planarity of \Gb{} in $\Gamma$ is proved analogously. Properties (i) and (iii) imply that no two paths $T_i$ and $T_j$ cross each other, while Property (iv) guarantees that each path $T_i$ is planar. Hence, the drawing of \Gg{} in $\Gamma$ is planar.
\end{proof}

The graphs in Theorem~\ref{thm:sim-level-npc-two-levels} can be made connected, by adding vertices and edges, at the expense of using one additional level. Also, the theorem holds true even if the simultaneous embedding is {\em geometric} or {\em with fixed edges} (see~\cite{bkr-sepg-13,bcdeeiklm-spge-07} for definitions).

In contrast to the NP-hardness results, a reduction to a proper instance of {\sc Cyclic Level Planarity} allows us to decide in polynomial time instances composed of two graphs on two levels. Namely, the edges of a graph are directed from $l_1$ to $l_2$, while those of the other graph are directed from $l_2$ to $l_1$.

\begin{theorem} \label{th:sim-poly}
{\sc Simultaneous Level Planarity} is quadratic-time solvable for two graphs on two levels. 
\end{theorem}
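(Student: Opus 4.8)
The plan is to reduce \textsc{Simultaneous Level Planarity} for two graphs $\langle\red{G_1},\blue{G_2}\rangle$ on two levels $l_1$ and $l_2$ to a single instance of \textsc{Cyclic Level Planarity}, and then invoke Theorem~\ref{co:cyclicLevel-polynomial} to obtain the quadratic running time. The key insight, hinted at right before the statement, is that simultaneity of the two level graphs can be encoded by orienting the two edge sets in \emph{opposite} directions along the cyclic structure. First I would construct a level graph $G=(V,E,\gamma)$ on two levels for \textsc{Cyclic Level Planarity}, where $V$ and $\gamma$ are inherited from the shared vertex set of the two input graphs, and $E$ contains every edge of $\red{G_1}$ oriented from $l_1$ to $l_2$ together with every edge of $\blue{G_2}$ oriented from $l_2$ to $l_1$. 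On the rolling cylinder $\mathbb R$, edges of the first type are monotone curves from level $1$ to level $2$, while edges of the second type are monotone curves that continue in the same clockwise direction from level $2$ back around to level $1$; crucially, because both vertex sets lie on the same two lines $l_1$ and $l_2$, a cyclic level embedding simultaneously fixes a single circular ordering of $V_1$ on $l_1$ and of $V_2$ on $l_2$ that both edge sets must respect.

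The core of the proof is then to establish the equivalence of the two instances, which I would argue in both directions. For the forward direction, given a simultaneous level embedding of $\langle\red{G_1},\blue{G_2}\rangle$ in the plane, I would read off the left-to-right orderings of $V_1$ along $l_1$ and of $V_2$ along $l_2$; since these orderings are shared between the two graphs by definition of simultaneous level planarity, I would place the vertices in the corresponding \emph{linear} order along the two parallel lines on $\mathbb R$, route the $\red{G_1}$-edges as monotone curves through the strip between $l_1$ and $l_2$ exactly as in the planar drawing, and route the $\blue{G_2}$-edges as monotone curves through the complementary strip (the part of $\mathbb R$ that wraps around from $l_2$ back to $l_1$). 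The planarity of $\red{G_1}$ guarantees no crossings among the first family within the first strip, and the planarity of $\blue{G_2}$ guarantees no crossings among the second family within the complementary strip; since the two families live in disjoint strips of $\mathbb R$, no crossing arises between them either. For the reverse direction, I would take a cyclic level embedding on $\mathbb R$, extract the induced circular orderings ${\cal O}_1$ on $V_1$ and ${\cal O}_2$ on $V_2$, cut the cylinder along a line to turn these into linear orderings, and reconstruct a simultaneous planar level drawing by placing the vertices in these orders and drawing each graph's edges monotonically; the two edge families, having occupied disjoint strips of $\mathbb R$, yield planar drawings of $\red{G_1}$ and $\blue{G_2}$ sharing the same vertex placement.

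The step I expect to be the main obstacle is handling the cyclic ``wrap-around'' carefully, i.e.\ verifying that the circular ordering produced by a cyclic embedding can be cut consistently so that \emph{both} $\red{G_1}$ and $\blue{G_2}$ simultaneously become planar in the plane. In the planar setting the orderings are linear, whereas on $\mathbb R$ they are genuinely circular, so I must argue that any circular ordering of $V_1$ (and correspondingly of $V_2$) can be opened into a linear one without forcing a crossing in either graph. The point is that the monotone edges of $\red{G_1}$ occupy the planar strip between $l_1$ and $l_2$, while the edges of $\blue{G_2}$ occupy the strip that contains the cut locus, so cutting along a ray that meets no $\blue{G_2}$-edge (which exists since the edges are finite curves) leaves the $\blue{G_2}$ drawing uncut and planar, while the $\red{G_1}$ drawing is untouched by the cut entirely; this is the delicate topological bookkeeping that makes the equivalence go through. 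Finally, since the reduction is clearly linear-time and produces a proper two-level instance, Theorem~\ref{co:cyclicLevel-polynomial} yields the claimed quadratic-time bound.
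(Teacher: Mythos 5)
Your reduction is exactly the paper's: the same proper two-level instance $(V,E,\gamma)$ of {\sc Cyclic Level Planarity} with the edges of $\red{G_1}$ directed from $l_1$ to $l_2$ and those of $\blue{G_2}$ from $l_2$ to $l_1$, the same two-strip argument for the forward direction, and the same appeal to Theorem~\ref{co:cyclicLevel-polynomial} plus linearity of the reduction for the time bound. However, the step you yourself single out as the main obstacle is wrong as stated, for two reasons. First, on the rolling cylinder as the paper defines it ($\mathbb R = \mathbb I \times \mathbb S^1$ with levels $l_j = \mathbb I \times \{e^{2\pi i \frac{j-1}{k}}\}$) each level is a \emph{segment}, not a circle, so the vertex orderings along $l_1$ and $l_2$ are already linear and there is no circular ordering to ``open up''; you have transplanted the geometry of the standing cylinder $\mathbb S$ (circular levels) onto $\mathbb R$, where the $\mathbb S^1$ factor is the direction of edge flow, transverse to the levels. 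Second, the proposed cut ``along a ray that meets no $\blue{G_2}$-edge'' does not exist whenever $E_2 \neq \emptyset$: every $\blue{G_2}$-edge is angularly monotone and spans the entire complementary strip from $l_2$ around to $l_1$, so a generatrix $\mathbb I \times \{e^{i\theta}\}$ at any angle $\theta$ strictly inside that strip crosses \emph{every} $\blue{G_2}$-edge exactly once (and a generatrix inside the other strip crosses every $\red{G_1}$-edge); finiteness of the edge curves is no help here.

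Fortunately, the fact this flawed step was meant to secure holds trivially, and the repair is the paper's own argument: since the levels are intervals, each of the two strips of $\mathbb R$ bounded by $l_1$ and $l_2$ is homeomorphic to a rectangle. In the reverse direction one therefore reroutes (reflects) the $\blue{G_2}$-edges into the strip containing the $\red{G_1}$-edges --- crossings between $\red{G_1}$-edges and $\blue{G_2}$-edges are irrelevant for simultaneous planarity --- and maps that single rectangle to the plane, yielding planar level drawings of both graphs on the same vertex placement; no cut through either edge family is ever needed. With that substitution your proof coincides with the paper's; the remaining ingredients (properness of the constructed instance, the forward wrapping argument, and the quadratic bound via Theorem~\ref{co:cyclicLevel-polynomial}) are correct as you state them.
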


\begin{proof}
Let $\langle \red{G_1(V,E_1,\gamma)},\blue{G_2(V,E_2,\gamma)} \rangle$ be an instance of the {\sc Simultaneous Level Planarity} problem, where each of \Gr{} and \Gb{} is a level graph on two levels $l_1$ and $l_2$. We define a proper instance $(V,E,\gamma)$ of {\sc Cyclic Level Planarity} as follows. The vertex set $V$ is the same as the one of \Gr{} and \Gb{}, as well as the function $\gamma:V\rightarrow\{1,2\}$; further, $E$ contains an edge $(u,v)$ for every $(u,v) \in E_1$ and an edge $(v,u)$ for every $(u,v) \in E_2$. We prove that $\langle \red{G_1(V,E_1,\gamma)},\blue{G_2(V,E_2,\gamma)} \rangle$ is simultaneous level planar if and only if $(V,E,\gamma)$ is cyclic level planar.

$(\Longrightarrow)$ Consider a simultaneous level embedding of \sefeinstance{}, map it to the surface $\mathbb R$ of the rolling cylinder, and wrap the edges of \Gb{} around the part delimited by $l_1$ and $l_2$ not containing the edges of \Gr{}, hence obtaining a cyclic level embedding of $(V,E,\gamma)$. 

$(\Longleftarrow)$ Consider a cyclic level embedding of $(V,E,\gamma)$ on $\mathbb R$, reroute the edges of \Gb{} to lie in the part of $\mathbb R$ delimited by $l_1$ and $l_2$ and containing the edges of \Gr{}, and map this drawing to the plane; this results in a simultaneous level embedding of \Gr{} and \Gb{}. 

The statement of the theorem then follows from Corollary~\ref{co:cyclicLevel-polynomial} and from the fact that the described reduction can be performed in linear time.
\end{proof}





\section{Conclusions and Open Problems} \label{se:conclusions}

In this paper we have settled the computational complexity of two of the main open problems in the research topic of level planarity by showing that the {\sc Cyclic Level Planarity} and the {\sc Torus Level Planarity} problems are polynomial-time solvable.
%
%
Our algorithms run in quartic time in the graph size; it is hence an interesting challenge to design new techniques to improve this time bound. 
We also introduced a notion of simultaneous level planarity for level graphs and we established a complexity dichotomy for this problem.

\begin{wrapfigure}[5]{r}{0.53\textwidth}
  \vspace{-20pt}
  \centering
\includegraphics{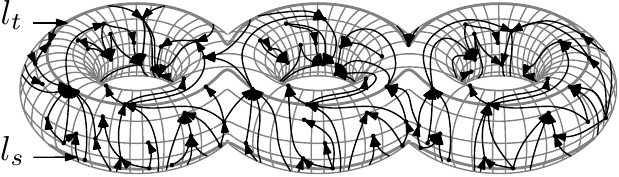}
\label{fig:double-torus}
\end{wrapfigure}

An intriguing research direction is the one of extending the concept of level planarity to surfaces with genus larger than one. 
However, there seems to be more than one meaningful way to arrange $k$ levels on a high-genus surface.
A reasonable choice would be the one shown in the figure, in which the levels are arranged in different sequences between two distinguished levels $l_s$ and $l_t$ (and edges only connect vertices on two levels in the same sequence). 
{\sc Radial Level Planarity} and {\sc Torus Level Planarity} can be regarded as special cases of this setting (with only one and two paths of levels between $l_s$ and $l_t$, respectively). 


\bibliographystyle{splncs03}
\bibliography{bibliography}

\end{document}